\documentclass[11pt]{article}
\usepackage{amsmath,amssymb}
\usepackage{mathtools}
\usepackage{sn-preamble-v2}
\usepackage{soul}
\usepackage{hyperref}
\hypersetup{
    colorlinks=true,
    citecolor=blue,
    linkcolor=purple,
    filecolor=magenta,      
    urlcolor=cyan,
    breaklinks=true
    }
\usepackage[nameinlink]{cleveref}

\Crefname{fact}{Fact}{Facts}
\Crefname{claim}{Claim}{Claims}
\usepackage[margin=1in]{geometry}
\usepackage{thmtools} 
\usepackage{thm-restate}
\usepackage[dvipsnames]{xcolor}
\usepackage{algorithm, algorithmic}
\usepackage{enumitem}
\usepackage[inline]{asymptote}
\usepackage{caption}
\usepackage[title]{appendix}
\usepackage{quiver}
\usepackage[export]{adjustbox}
\usepackage{booktabs}
\usepackage{multirow}
\usepackage{cite}

\title{On the Complexity of Mixed Equilibria \\ in First-Price Auctions with Correlated Priors}
\author{CCCPS}

\author{
Mark Chen
\\
cmark@cs.stanford.edu
\\
Stanford University
\and
Xi Chen
\\
xc2198@cs.columbia.edu
\\
Columbia University
\and
Hao Cui
\\
hc3377@columbia.edu
\\
Columbia University
\and
William Pires
\\
wp2294@columbia.edu
\\
Columbia University
\and
Jonah Stockwell
\\
js5384@columbia.edu
\\
Columbia University
}

\newcommand{\PPAD}{\textsf{PPAD}}
\newcommand{\NP}{\textsf{NP}}
\newcommand{\FIXP}{\textsf{FIXP}}
\newcommand{\Gcircuit}{\textnormal{\textsc{GCircuit}}}

\newcommand{\xx}[1]{{\color{blue}x_{#1}}}
\newcommand{\yy}[1]{{\color{red}y_{#1}}}

\newcommand{\x}{{\color{blue}x}}
\newcommand{\y}{{\color{red}y}}

\newcommand{\h}{{\color{OliveGreen}h}}
\newcommand{\hh}[1]{{\color{OliveGreen}h_{#1}}}
\newcommand{\z}[1]{{\color{Sepia} z_{#1}}}

\newcommand{\p}[1]{p({#1})}

\begin{document}

\maketitle

\begin{abstract}
We show that computing an approximate mixed Bayes–Nash equilibrium in a discrete first-price auction with correlated priors is \PPAD-complete. The key intermediate step in our reduc\-tion is the \PPAD-completeness of computing an approximate Nash equilibrium in a new normal-form game, the \emph{hypergraph discrete first-price auction}, which may be of independent interest.

\end{abstract}
\newpage 
{\small \tableofcontents}
\newpage

\section{Introduction}
One of the most fundamental and widely adopted auction formats is the \emph{first-price} auction, in which the highest bidder wins the item and pays her bid. In particular, first-price auctions and their variants are extensively used in online display advertising \cite{Renato20,Stylianos21,Conitzer22, Aggarwal24}.
However, unlike the second-price auction, 
  in which bidding truthfully is a dominant strategy~\cite{Vickrey61}, the first-price auction incentivizes bidding strategically based on one's belief about the other bidders' values. The game-theoretic study of first-price auctions dates back to the seminal work~of \cite{Vickrey61} and has remained a central theme of auction theory.
The existence, uniqueness, and characterization of their Bayes-Nash equilibria, within
the framework of Bayesian games of incomplete
information~\cite{Harsanyi67}, have been studied
extensively
\cite{Griesmer67,Riley71,Chwe89,Maskin85,Lebrun99,Maskin00,Maskin03,Plum92,Marshall94,Lebrun96,Lizzeri00,Athey01,Reny04,Lebrun06,Athey07,Chawla13, Bergemann17}. 

In recent years, a line of work in theoretical computer science has studied the computational aspect of these auctions: \emph{how hard is it to compute Bayes-Nash equilibria of first-price auctions?} A number of hardness results have been obtained for various
settings, which we review in \Cref{sec:related}.
However, the complexity of the most
fundamental setting, where bidders have independent private values (IPV),  
remains wide open~\cite{FGHLP23,filosratsikas2024computationequilibriadiscretefirstprice}. A natural intermediate step, as suggested
by~\cite{filosratsikas2025equilibriumcomputationfirstpriceauctions}, is the setting with \emph{correlated priors}, where the values of bidders are
drawn from a common joint distribution; this is the setting we study.

Concretely, we consider  \emph{discrete first-price auctions} (DFPA)
\emph{with correlated priors}:
\begin{flushleft}\begin{enumerate}
    \item[] There are $n$ bidders.
    Let $B$ be a finite subset of $[0,1]$ that denotes the bidding space, i.e., the set of bids they can submit.
    We are also given a joint probability distribution $\calD$ over $[0,1]^n$ with finite support, where each point specifies the values of the $n$ bidders. 
    Both $B$ and $\calD$ are given explicitly in the input; in particular, $\calD$ is given as a sequence of points in $[0,1]^n$ and their probability masses.
    The goal is to find an approximate mixed Bayes-Nash equilibrium (MBNE).
    Roughly speaking, an equilibrium consists of a mixed strategy $\beta_i$ for each bidder $i$, which maps each value in $V_i$ (the support of bidder $i$'s values in $\calD$) to a probability distribution over $B$ such that bidding according to $\beta_i$ is an approximate best response (i.e., to approximately maximize her expected utility) with respect to other bidders' strategies when their values are drawn from $\calD$.
    (See \Cref{sec:defDFPA} for the definition.)
\end{enumerate}\end{flushleft}
In \cite{filosratsikas2025equilibriumcomputationfirstpriceauctions}, the problem of deciding the existence of a pure
Bayes-Nash~equilibrium is shown to be \NP-hard. Mixed Bayes-Nash equilibria, in contrast, always
exist in this setting \cite{filosratsikas2025equilibriumcomputationfirstpriceauctions}, so computing one is a total search problem; its complexity was left open
by~\cite{filosratsikas2025equilibriumcomputationfirstpriceauctions}.

\subsection{Our Results}
We 
  resolve the complexity of computing an MBNE of a DFPA with correlated priors:

\begin{theorem} \label{thm:DFPA_ppad_hard}
    There exists a constant $\epsilon>0$ such that the problem of computing an $\epsilon$-approximate MBNE in a DFPA with correlated priors is $\PPAD$-complete. 
{The \PPAD-hardness holds even when the bidding space $B$ is $\{0,1/5,2/5\}$.}
\end{theorem}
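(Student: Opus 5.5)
Membership in \PPAD{} and \PPAD-hardness are argued separately, with hardness routed through the intermediate normal-form game from the abstract, the \emph{hypergraph discrete first-price auction} (HDFPA).

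\textbf{Membership.} The plan is to reduce an $\epsilon$-approximate MBNE of a DFPA with correlated priors to an approximate Nash equilibrium of a polynomial-size normal-form (or polymatrix-like) game. Each pair (bidder $i$, value $v\in V_i$) becomes a player whose pure actions are the bids in $B$. Because $\calD$ is given explicitly, every support point $(v_1,\dots,v_n)$ specifies which value-players interact. So the expected utility of player $(i,v)$ is a sum over support points with $v_i=v$, weighted by $\Pr[\calD=\cdot\mid v_i=v]$. Each summand is a function of the $n$ mixed strategies at that point, and it is computable by a polynomial-size arithmetic circuit: computing the probability of winning with tie-breaking is a polynomial in the mixing probabilities. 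The resulting fixed-point map is a Brouwer/Nash map built from $+,\times,\max$ with Lipschitz constant polynomially bounded. Approximate fixed points of such maps lie in \PPAD{}, as in the standard argument of Etessami--Yannakakis for approximate equilibria of succinct games with polynomial-time expected-utility oracles. An $\epsilon$-approximate equilibrium of this game is exactly an $\epsilon$-approximate MBNE.

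\textbf{Hardness, step 1: HDFPA.} The HDFPA has players on the vertices of a hypergraph. Each hyperedge is a small first-price auction among its vertices with given values, and a player's payoff is the weighted sum over incident hyperedges, with one mixed strategy shared across all of them. I would show that computing an $\epsilon$-Nash equilibrium of an HDFPA with constant $\epsilon$ and bids in $\{0,1/5,2/5\}$ is \PPAD-hard. The reduction is from $\epsilon$-\Gcircuit{} with constant $\epsilon$ (equivalently, constant-approximate polymatrix / pure-circuit problems). Each gate variable is encoded by the probability a player places on a designated bid, and small hyperedge auctions act as gadgets implementing NOT, NOR/AND, copy and comparison. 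The intended mechanism is that a player's best response is a threshold function of a neighbour's bid distribution, because winning probability against a bidder mixing between $0$ and $2/5$ changes which of $1/5$ and $2/5$ is optimal. Targeting the Pure-Circuit problem of Deligkas et al.\ is attractive, since it needs only robust Boolean gates with constant approximation. Three bids suffice to realize such gadgets, which is where $B=\{0,1/5,2/5\}$ comes from.

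\textbf{Hardness, step 2: HDFPA to DFPA.} Given an HDFPA with $N$ players, build a DFPA with $n=N$ bidders. For each hyperedge $e$, add a support point of $\calD$ of probability proportional to the weight of $e$. At that point, bidders in $e$ get their gadget values and all other bidders get value $0$. A value-$0$ bidder can never profit, so it is essentially harmless: bidding $0$ at value $0$ is optimal, and ties at $0$ can be handled by making gadget winners pay at least $1/5$. The trouble is that in a correlated prior, bidder $i$ observes only its value, so its strategy is per value. I would therefore make all of bidder $i$'s nonzero values across its incident hyperedges equal, or a constant number of distinct values, so that the per-value strategy coincides with the single shared HDFPA strategy. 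Alternatively, I would design the HDFPA from the start so that each player has one common value in all its hyperedges. The approximation then transfers, with $\epsilon$ degrading by at most a constant factor after normalizing conditional probabilities.

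\textbf{Main obstacle.} I expect step 1 to be the hard part. A bidder's payoff in one auction depends on the \emph{joint} distribution of all opponents' bids, so the gadgets must be designed such that only the intended thresholds matter despite the multilinear interaction. All of this has to hold with a single value per player and a constant gap that survives $\epsilon$-approximate best responses. I would first try two-player hyperedges plus a constant-size "reference" bidder to create the thresholds.
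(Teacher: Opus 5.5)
Your membership argument is fine and is essentially the paper's: one player per (bidder, value) pair is exactly the hypergraph decomposition $H_A$, and the Nash map plus Etessami--Yannakakis goes through once expected utilities are computed by a dynamic program over ties.

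The hardness part has two genuine gaps. First, step~1 does not identify the real obstruction. The multilinearity you worry about is sidestepped in the paper by using only hyperedges with at most two strategic bidders plus value-$1/2$ ``zero-anchors''. This makes every $\Delta_x^e$ affine in a single neighbour's probability, and the anchors tune the coefficients through the tie-split at $0$. The actual difficulty is that an auction is not directed. If a gate gadget puts its input player into auctions with a helper, those auctions also change the input player's incentives. So downstream gadgets feed back into upstream ones, and gadgets that work in isolation need not compose.

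The missing idea in the paper is to split each node $u$ into an output bidder $\yy{u}$ (value $4$) and an input bidder $\xx{u}$ (value $2$), joined by a COPY gadget with two properties. Its auctions satisfy $\Delta^R_{\yy{u}}(p)=0$ under \emph{every} profile, so $\yy{u}$ responds only to its own SUM/NOT gadget. Its weights ($\Lambda=40$) dominate the at most two downstream gadgets containing $\xx{u}$. Since $|\Delta^{R'}_{\xx{u}}|$ is at most the total weight of $R'$, those gadgets cannot push $\p{\xx{u}}$ away from $\p{\yy{u}}\pm O(\eta)$. Without some such directedness device, your reduction from \Gcircuit{} or Pure-Circuit does not go through, however the individual gadgets are designed.

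Second, step~2 is the naive transfer, and the fix ``making gadget winners pay at least $1/5$'' cannot be enforced. At an approximate MBNE nothing prevents a nonzero-value bidder from bidding $0$. Whenever every member of $S$ bids $0$, the item is split among all $n$ bidders rather than the $|S|$ members, so DFPA and HDFPA payoffs diverge. Your own step-1 mechanism, with gadget bidders mixing between $0$ and $2/5$, lives exactly on this bad event.

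The paper proceeds as follows:
\begin{itemize}
\item It builds the HDFPA with bids $\{0,1\}$ and shifts bids and values up by one in the DFPA ($B'=\{0,1,2\}$, $v_i'=v_i+1$). So the third bid is not a gadget resource; it is the bid reserved for absent bidders.
\item It adds $50$ dummy bidders who always have value $0$. Bidding $0$ then yields at most $v_i'/50\le 1/10$, while bidding $2$ yields at least $1/4$. The anchors, now of value $3/2$, also get a solo support point of mass $W_i$, which makes bid $1$ worth at least $1/4$. Together this gives $\beta_{x_i}(v_i')(0)\le 10\epsilon$.
\item It reads off $\sigma_i(1)=\beta_{x_i}(v_i')(2)$ and bounds the loss by $O(\epsilon)\,W_i$. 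This uses three facts: replacing one's own $0$-bids by $1$ only helps; opponents bidding lower only helps a deviation; and a union bound over the at most three other nonzero-value bidders at each support point.
\end{itemize}
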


En route towards proving \Cref{thm:DFPA_ppad_hard}, a crucial step is the introduction of a new game which we call \emph{Hypergraph Discrete First-Price Auctions} (HDFPA):
\begin{flushleft}\begin{enumerate}
    \item[] As in the DFPA, there are $n$ bidders and $B$ is a finite subset of $[0,1]$ that denotes the bidding space. 
    \emph{Unlike in a DFPA with correlated priors, each bidder in an HDFPA has a fixed value $v_i\in [0,1]$.} 
    The auction that (a subset of) bidders will participate in, on the other hand, is drawn from a distribution. 
    Formally, we are given a weighted hypergraph $H$ explicitly, which   defines a distribution over subsets of $[n]$.
    The game proceeds as follows: First each bidder submits a bid in $B$; then a subset $\bS\subseteq [n]$ is drawn from the weighted hypergraph $H$; finally, a
    first-price auction is held between bidders in $\bS$ using their bids submitted at the beginning.
    A Nash equilibrium (NE)\footnote{As it becomes clear later, an HDFPA is a standard normal-form game.} consists of a distribution over $B$ for each bidder as a best response mixed strategy to maximize her expected utility with respect to other bidders' mixed strategies.
    (See \Cref{sec:defHDFPA} for the formal definition, especially the definition of approximate NE in \Cref{def:normalized-ne}, which is slightly different from the standard definition to simplify the presentation.)
\end{enumerate}\end{flushleft}

The new model of HDFPA, in our opinion, is considerably easier to work with compared to DFPA. 
It was inspired by the work of \cite{filosratsikas2025equilibriumcomputationfirstpriceauctions} on the NP-hardness of
deciding the existence of pure BNE in DFPA. %
Examining their construction, we note that in every point in $\supp(\calD)$, all but at most four bidders have value $0$.
Intuitively, one can view each point in $\supp(\calD)$ as an auction held between a small set of at most four bidders with positive values. 
Now this is not formally correct because even bidders with value $0$ will submit bids of $0$ (given the no-overbidding rule) and still participate in the auction, which adds complexity to the design and analysis of the construction of \cite{filosratsikas2025equilibriumcomputationfirstpriceauctions}. 
In contrast, in the HDFPA, bidders outside of the drawn set $\bS$ do not participate in the auction at all, simplifying the construction and its analysis.

A more noticeable simplification the HDFPA model provides is that it is a standard normal-form game with complete information. Each bidder $i$ in an HDFPA has a fixed value $v_i\in [0,1]$ known to all bidders. Therefore, we work with Nash equilibria instead of Bayes-Nash equilibria, and do not have to consider conditional distributions such as $\calD_{i|v}$ (i.e., distributions of other bidders' values conditioning on bidder $i$ having value $v$), which is another complication in the work of  \cite{filosratsikas2025equilibriumcomputationfirstpriceauctions}.

Despite its simplicity, 
  HDFPA serves as a key bridge to establish \Cref{thm:DFPA_ppad_hard}.
First we show that the problem of finding an approximate Nash equilibrium in an HDFPA is \PPAD-complete:

\begin{theorem} \label{thm:HDFPA_ppad_hard}
    There exists a constant $\epsilon >0$ such that the problem of computing an $\epsilon$-approximate NE in an HDFPA is \PPAD-complete. 
{The \PPAD-hardness holds even when the bidding space $B$ is $\{0,1/4\}$ and each auction in the distribution has no more than four participants.} 
\end{theorem}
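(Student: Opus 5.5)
Membership in \PPAD{} is the easy direction. An HDFPA is a normal-form game with $n$ players, each having $|B|$ pure strategies. The expected utility of bidder $i$ from bid $b$ against product mixed strategies is a sum over the hyperedges $S \ni i$ of $H$, weighted by their probabilities. Each term is a polynomial in the mixing probabilities of the other bidders in $S$, which are at most three since auctions have at most four participants; in general the term can be evaluated efficiently by dynamic programming over the other bidders, including the tie-breaking rule. Hence each bidder's payoff is a polynomial-time computable function of the mixed profile, and the standard reduction of approximate Nash equilibria in succinctly representable games with polynomial-time expected-utility oracles to \textsc{Brouwer}/\textsc{End-of-Line} applies. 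Alternatively, we can build a \Gcircuit{}-style Brouwer map $x_i(b)\mapsto \max(0, x_i(b)+u_i(b)-u_i(x))$, normalized, and take an approximate fixed point. Either way this places approximate NE in \PPAD{} for every constant $\epsilon$.

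For hardness I would reduce from the \PPAD-complete problem of finding an $\epsilon$-approximate solution to a generalized circuit (\Gcircuit) with constant $\epsilon$ and gates drawn from a small basis: \emph{assignment/constant}, \emph{addition} or \emph{subtraction} (truncated to $[0,1]$), \emph{scaling by a constant}, and \emph{comparison/Boolean} gates. With $B=\{0,1/4\}$, each bidder's mixed strategy is a single number $x_i=\Pr[\text{bid }1/4]$, which encodes the value of node $i$. The plan is to implement each gate with a small gadget. Each gadget is a family of hyperedges of size at most four containing the output bidder $v$ and a few input or auxiliary bidders, with fixed values chosen so that bidder $v$'s utility difference between bidding $1/4$ and $0$ is an affine function $c\,(f(x_{\text{in}})-x_v)$ plus lower-order terms. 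In an approximate NE, $x_v$ is then forced to be near $f(x_{\text{in}})$, with truncation coming for free from $x_v\in[0,1]$. The key computation is how one auction works. If $v$ has value $w$ and faces a set $T$ of opponents, then bidding $1/4$ yields roughly $(w-1/4)$ times the probability of winning against the bid-$1/4$ opponents under tie-breaking, and bidding $0$ yields $w$ times the probability that all opponents bid $0$, split by ties. These quantities are polynomials such as $\prod_{j\in T}(1-x_j)$. Choosing $w$ and the edge weights lets us realize constants and $1-x_j$ terms. For example, $v$ alone in an auction with $j$ gives gain proportional to a term affine in $x_j$, and a self-loop-like edge $\{v\}$ or $\{v,\text{const}\}$ supplies the $-x_v$ feedback. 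That feedback can be produced by pairing $v$ with an auxiliary copy bidder that copies $x_v$.

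Each bidder must also appear in many gadgets, as an input to several gates. Their contributions are mixed by scaling the hyperedge weights so that the gadget in which $v$ is the output dominates $v$'s incentives, while the input roles contribute only lower-order terms. Here I would use a standard trick: route every input through dedicated copy bidders so that each bidder is an input to at most $O(1)$ gadgets, and give output edges a constant-factor larger weight. This keeps the error within a constant fraction of $\epsilon$. Since the paper normalizes approximate NE (\Cref{def:normalized-ne}) presumably per-bidder relative to that bidder's total edge weight, the constant-$\epsilon$ guarantee transfers gadget by gadget.

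I expect the main obstacle to be the \emph{multiplicative} structure of first-price utilities. Winning probabilities are products of $(1-x_j)$ rather than sums, so linear gates like addition are not directly expressible. I would first try to realize addition by splitting it into separate two-player hyperedges $\{v,a\}$ and $\{v,b\}$, one per input, since expected utility is additive across hyperedges. Three- and four-participant edges would be used only where a product or threshold (comparison) behaviour is needed, or to cancel unwanted quadratic terms through an auxiliary bidder fixed at a constant strategy. Verifying that tie-breaking and the bid $1/4$ relative to the chosen values produce exactly the needed signs and slopes is the delicate calculation.
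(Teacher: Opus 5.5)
Your membership argument is fine and is essentially the paper's: compute utilities by dynamic programming, normalize by $W_i$, then apply Nash's map together with Etessami--Yannakakis. The hardness plan, however, has a gap at its core. With two bids, bidder $v$'s advantage $\Delta_v(p)=U_v(1/4,p_{-v})-U_v(0,p_{-v})$ does not depend on $p(v)$ at all. So no choice of hyperedges and values can make it equal $c\,(f(x_{\rm in})-x_v)$. A singleton edge $\{v\}$ contributes only the constant $-w_e$, and so does an edge with a fixed-strategy bidder.

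Your fallback, ``an auxiliary copy bidder that copies $x_v$,'' is circular. A single bidder's best response is determined by the sign of her own advantage, so she cannot track a continuum of values by herself. Copying is itself a gate that needs exactly the pinning mechanism you are trying to build.

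The missing idea is to let a dedicated helper do the comparison, and let the output react to the helper through a threshold. In the paper's SUM gadget, a helper $h$ of value $3$ sits in $\{x_1,h,z_1,z_2\}$, $\{x_2,h,z_3,z_4\}$, $\{y,h\}$, $\{y,h,z_5,z_6\}$ and $\{h\}$. It gets $\Delta_h^R=p(y)-p(x_1)-p(x_2)$, while the output gets $\Delta_y^R=1-2p(h)$. The output's value is therefore pinned by the \emph{helper's} near-indifference, not by the output's own incentive.

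The second gap is interference, which you treat as a matter of weight scaling. In your design, the edges that make $v$ sensitive to its input $j$ are edges like $\{v,j\}$, which contain $j$. So $j$ is perturbed by exactly the weight that gives $v$ its signal. Requiring every bidder's output gadget to outweigh its input roles by a constant factor then forces weights to shrink along every wire. That is impossible, because generalized circuits are cyclic, and routing through copy bidders does not help since those copies sit on the same cycles.

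The paper breaks the cycle with two ingredients your proposal lacks.
\begin{enumerate}
\item The coefficient $\frac{v_x}{m+2}-\frac{v_x-1}{2}$ that an edge with $m$ zero-anchors contributes to a bidder's advantage depends on that bidder's own value. A shared auction can therefore act asymmetrically on its participants.
\item Each node $u$ is split into an output bidder $y_u$ (value $4$) and an input bidder $x_u$ (value $2$), joined by a COPY gadget that is \emph{exactly} directed. With edges $\{y,h\},\{y,h,z_1\},\{y\}$ of weights $4\Lambda,12\Lambda,24\Lambda$, the $p(h)$-terms in $\Delta_y$ cancel ($2\Lambda-2\Lambda$), and so do the constants. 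Meanwhile $\Delta_h$ still sees $p(y)$ with coefficient $2\Lambda$. Hence $\Delta_y^R\equiv 0$ for every profile.
\end{enumerate}
Scaling COPY by $\Lambda=40$ then makes it dominate $x_u$'s at most two light input roles; this is where fan-out at most two is used. Meanwhile $y_u$'s incentives come only from its own SUM or NOT gadget. Without a helper-mediated comparison and some way to make shared auctions act asymmetrically, the weight scaling you describe cannot be made consistent around cycles.
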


Next we prove \Cref{thm:DFPA_ppad_hard} by 
  giving a polynomial-time reduction from hard 
  HDFPA instances constructed in the proof of \Cref{thm:HDFPA_ppad_hard}
  to DFPA with correlated priors  in \Cref{sec:maintheorem}.
We give an overview of the proofs of both \Cref{thm:HDFPA_ppad_hard} and \Cref{thm:DFPA_ppad_hard} in \Cref{sec:techoverview}.

To complete the picture, we also show that the problem of deciding whether an HDFPA has a pure NE or not is NP-complete. This is proved in \Cref{sec:npc}.

\subsection{Related Work}\label{sec:related}

In \cite{FGHLP23}, the authors studied the first-price auction with continuous value distributions and a discrete bidding space. They showed that the problem of computing an $\eps$-approximate pure BNE is \PPAD-complete and that computing an exact one is \FIXP-complete. Their results hold~when~the priors are \emph{subjective}, meaning that each bidder $i$ holds her own belief $F_{i,j}$ about the value of~each other bidder $j$, and these beliefs are independent across $j$. \cite{ChenPeng2023} subsequently established \PPAD-hardness for the same setting with common priors. However, their construction  relies on a non-standard (``trilateral'') tie-breaking rule that is part of the input. Under the standard uniform~tie-breaking rule, on the other hand, they gave a polynomial-time approximation scheme (PTAS).

Recently \cite{filosratsikas2024computationequilibriadiscretefirstprice} turned to the \emph{discrete} first-price auction, in which both values and bids are discrete.  
For subjective priors, 
\cite{filosratsikas2024computationequilibriadiscretefirstprice} showed that the problem of deciding the existence of a pure BNE is \NP-complete and that of computing an MBNE, which always exists, is \PPAD-complete.
Subsequently, \cite{filosratsikas2025equilibriumcomputationfirstpriceauctions} studied the discrete first-price auction with common correlated priors. They proved that deciding the existence of a pure BNE remains \NP-hard, and left
  the complexity of MBNE in this setting as an open problem. 
This is the problem we settle in \Cref{thm:DFPA_ppad_hard}.

\section{Technical Overview}\label{sec:techoverview}

We give overviews of the 
  \PPAD-hardness proofs of HDFPA and DFPA.
For
convenience, the proof for HDFPAs (\Cref{sec:mainreduction}) uses values in $[0,4]$ and
bids in $[0,1]$, while the proof for DFPAs (\Cref{sec:maintheorem}) uses values in $[0,5]$ and
bids in $[0,2]$; we use the same ranges in overviews below. These choices~are without loss
of generality: dividing all values and bids by $4$ (in HDFPAs) or by $5$ (in DFPAs) 
brings them back into $[0,1]$. This rescales utilities of bidders by the same factor, so an
$\eps$-approximate NE or MBNE of the original instance is an
$(\eps/4)$- or $(\eps/5)$-approximate NE or MBNE of the normalized one. The same hardness results hold after scaling down the approximation parameter accordingly.

\subsection{\PPAD-hardness of HDFPA}%

To prove the \PPAD-hardness of HDFPA, we 
  give a polynomial-time reduction from the 
  $\Gcircuit$ problem \cite{BudishEtAl2023} with 
  SUM and NOT gates,
  which is known to be \PPAD-hard to approximate for some sufficiently small
  constant $\eps>0$. 
The input of the problem is a \emph{generalized circuit} $C=(V, \calT)$, where each node $u\in V$ is associated with a variable $\alpha_u$, and $\calT$ is a collection of SUM and NOT gates applied on nodes in $V$. The goal is to find an assignment $\alpha=(\alpha_u)\in [0,1+\eps]^V$ of the variables that approximately satisfies every gate constraint. 
For example, a SUM gate with inputs $u,v$ and output $w$ requires that
$\alpha_w=\min(\alpha_u+\alpha_v,1) \pm \epsilon.$
See \Cref{sec:Gcircuit} for the formal definition. Recall that each node can be the output of at most one gate in $\calT$.
Moreover, the construction of \cite{BudishEtAl2023} ensures that each node is an input to at most two gates. 
\medskip

\noindent \textbf{The gadget paradigm of \cite{GDP09}:} To build intuition for our proof, we take a conceptual look at how \cite{GDP09}  uses \emph{game gadgets} to implement generalized circuits in graphical games.

Given a generalized circuit $C=(V,\calT)$, 
  the construction of \cite{GDP09} introduces, for each node $u\in V$, a player $y_u$ with action set $\{0,1\}$ in the graphical game $G$.
Given a mixed strategy profile $p$ of $G$, let {$\p{y_{u}}\in [0,1]$} denote the probability that $y_u$ plays action $1$.
The goal is to design the~payoffs of $G$ so that,
at every $\epsilon'$-approximate NE $p$ of $G$, for a sufficiently small constant $\eps'>0$,
  setting $\alpha_u=p(y_u)$ for all $u\in V$ 
  approximately satisfies every gate constraint in $\calT$. 

This is achieved by \emph{game gadgets}. 
For example, a SUM gate with input~nodes $u,v$ and output node $w$ is simulated by a {game gadget} that involves only the players $y_{u}, y_{v},y_{w}$ and a few extra helper players created specifically for that gadget.
Crucially, the game gadget is  \emph{directed}: 
it specifies the payoffs of its output player $y_w$ and its own helper players only, and leaves 
  the payoffs of its input players $y_u$ and $y_v$ unspecified.
It guarantees that, 
  regardless of how $y_u$ and $y_v$ play  (which depends on how their payoffs are defined),
  as long as $y_w$ and 
  the helper players play $\eps'$-best responses, the desired constraint always holds:
\begin{equation}\label{eq:haha1}
\p{y_{w}}=\min\big(\p{y_{u}}+\p{y_{v}},1\big) \pm \epsilon
\end{equation}
All other game gadgets of \cite{GDP09} are directed in the same way. 
Since each node of a generalized circuit is the output of at most one gate, and each helper player belongs to a single game gadget, every player's payoff is specified by at most one game gadget. The game gadgets therefore do~not interfere with one another, and each
enforces its own constraint, such as~\eqref{eq:haha1}. Consequently, every $\eps'$-approximate NE of $G$ yields an approximate solution to the generalized circuit $C$.\medskip

\noindent \textbf{The  main challenge: lack of directedness.} 
At a high level, we would like to follow the gadget paradigm of \cite{GDP09} and simulate generalized circuits by HDFPAs, using   \emph{auction gadgets} in place of game gadgets.
To this end, we introduce~a bidder $x_u$ for each node $u\in V$ of the generalized circuit $C=(V,\calT)$, and set
 {the bidding space $B$ to be $\{0,1\}$}. As before, for a mixed strategy profile $p$ of the HDFPA $H$ that we construct, we write $p(x_u)$ for the probability that $x_u$ bids $1$.
The goal is to design the distribution of auctions (i.e., the weighted hypergraph over bidders) so that, at every $\eps'$-approximate NE $p$ of $H$, setting $\alpha_u=p(x_u)$ approximately satisfies every gate constraint in $\calT$.

For example, for a SUM gate in $\calT$ with input nodes $u,v$ and output node $w$, we would like
to design an auction gadget, i.e., a weighted collection of auctions among $x_u$, $x_v$, $x_w$ and
possibly some helper bidders created specifically for the gadget, such
that~\eqref{eq:haha1} (after replacing $y$'s with $x$'s) holds for  at every approximate NE. Designing such an auction gadget in
isolation is not difficult. The challenge arises when gadgets are combined to work together. A game
gadget can be made directed simply by leaving the payoffs of its input players
unspecified. An auction, by contrast, affects \emph{all} of its participants: if two bidders $x$
and $y$ take part in an auction, then $x$'s utility depends on how $y$ bids,
\emph{and vice versa}. Since a bidder $x_u$ may take part in several gadgets (as the
output of one gate and as an input of at most two gates), it is unclear how to prevent the auction gadgets
from interfering with one another, even if each of them enforces its own constraint in
isolation.\medskip

\noindent\textbf{Our construction.}
For each node $u\in V$ of the generalized circuit $C=(V,\calT)$, we introduce \emph{two} bidders in
the HDFPA $H$: an \emph{\color{red}output bidder} $\yy{u}$ with value~$4$ and an \emph{\color{blue}input
bidder} $\xx{u}$ with value~$2$.\footnote{Throughout the paper, we use color coding
to indicate the role of each bidder: {\color{blue}blue} for input bidders and
{\color{red}red} for output bidders. }
Intuitively, $\yy{u}$ represents $u$ as the output of its gate, while $\xx{u}$
represents $u$ whenever it is used as an input to other gates. The two bidders are
linked by a \emph{COPY auction gadget} from $\yy{u}$ to $\xx{u}$, which has two
properties:
\begin{flushleft}\begin{enumerate}
\item \emph{Copying:} regardless of how $\yy{u}$ plays, as long as $\xx{u}$ and helpers of the gadget play
  $\eps'$-best responses, we have $\p{\xx{u}} = \p{\yy{u}} \pm \eps$: the
  strategy of $\yy{u}$ is approximately copied to $\xx{u}$.
\item \emph{Directedness:} under \emph{every} mixed strategy profile, the auctions of the COPY gadget
  are\\ such that there is no difference in the payoff for $\yy{u}$ when she bids $0$ or $1$. I.e. $\yy{u}$'s optimal strategy isn't impacted by the gadget at all. 
\end{enumerate}\end{flushleft}
We present the COPY gadget in \Cref{sec:copygadget}. 

With the COPY gadget in hand, we build the HDFPA $H$ as follows:
\begin{flushleft}\begin{enumerate}
\item For each node $u\in V$, we add a COPY gadget from $\yy{u}$ to $\xx{u}$.
\item For each SUM gate in $\calT$ with input nodes $u,v$ and output node $w$, we add a
  SUM gadget over $\xx{u}$, $\xx{v}$, $\yy{w}$ and a few helper bidders such that,
  regardless of how $\xx{u}$ and $\xx{v}$ play, as long as $\yy{w}$ and the helpers
  play $\eps'$-best responses with respect to the gadget, we always have
  \begin{equation}\label{eq:sum-auction}
    \p{\yy{w}} = \min\bigl(\p{\xx{u}}+\p{\xx{v}},\,1\bigr) \pm \eps.
  \end{equation}
\item For each NOT gate in $\calT$ with input node $u$ and output node $w$, we add a
  NOT gadget over $\xx{u}$, $\yy{w}$ and a few helper bidders such that, regardless of
  how $\xx{u}$ plays, as long as $\yy{w}$ and the helpers play $\eps'$-best responses
  with respect to the gadget, we always have
  \begin{equation}\label{eq:not-auction}
    \p{\yy{w}} = 1-\p{\xx{u}} \pm \eps.
  \end{equation}
\end{enumerate}\end{flushleft}
Here, ``with respect to the gadget'' means that only the utility from the gadget's own
auctions is taken into account. Unlike the COPY gadget, the SUM and NOT gadgets are
\emph{not} directed.

To see why these gadgets can be combined, consider the two types of bidders
separately.
\begin{flushleft}\begin{enumerate}
\item \emph{Output bidders.} Each $\yy{w}$ takes part in at most two gadgets: the COPY
  gadget from $\yy{w}$ to $\xx{w}$, and the SUM or NOT gadget whose output node is~$w$.
  Since the COPY gadget is directed, all of $\yy{w}$'s gains from switching bids come from the latter.
  Hence, at every $\eps'$-approximate NE of $H$, $\yy{w}$ plays an $\eps'$-best
  response with respect to that gadget, and \eqref{eq:sum-auction}
  or~\eqref{eq:not-auction} holds.
\item \emph{Input bidders.} Each $\xx{u}$ takes part in the COPY gadget from $\yy{u}$
  to $\xx{u}$, and in at most two SUM or NOT gadgets as an input. Since the SUM and NOT
  gadgets are not directed, they do affect $\xx{u}$'s utility. To control this effect,
  we give the auctions of the COPY gadget weights larger than those of the SUM and NOT
  gadgets by a sufficiently large factor.
  The SUM and NOT gadgets then have only a small influence on $\xx{u}$, so $\xx{u}$
  plays mainly with respect to the   COPY gadget, and
  $\p{\xx{u}} = \p{\yy{u}} \pm O(\eps)$ is  still enforced.
\end{enumerate}\end{flushleft}
Consequently, at every $\eps'$-approximate NE of $H$, setting $\alpha_u = \p{\yy{u}}$
for all $u\in V$ approximately satisfies every gate constraint of the generalized circuit $C$.

\subsection{\PPAD-hardness of DFPA}

To this end, we would like to reduce the problem of computing an $\epsilon$-approximate NE of an HDFPA to that of computing an $\epsilon'$-approximate MBNE of a DFPA with correlated priors, and 
it suffices~to focus on HDFPAs constructed in the proof of \Cref{thm:HDFPA_ppad_hard}. In particular, we assume that the bidding space $B=\{0,1\}$ and every auction in the hypergraph has at most four participants. 

Let $H$ be an HDFPA with $n$ bidders $[n]$, and let $v_i$ be the value of bidder $i$.
Let $E$ be the set of weighted hyperedges, where a hyperedge $e=(S,w_e)\in E$ means that with probability $w_e$, bidders in $S$ are selected to participate in the first-price auction.
A first natural attempt is the following:
\begin{flushleft}\begin{itemize}
    \item[]\textbf{Attempt \#1}: We build a DFPA using the same bidders $[n]$ and the same bidding space $B=\{0,1\}$. For each hyperedge $e = (S, w_e)$ $\in E$, add a point $p$ to the support of the correlated value distribution $\calD$ in DFPA as follows: for each $i \in S$, set the value of bidder $i$ in $p$ to be $p_i=v_i$, its value in the HDFPA, and for each $i \not \in S$ set $p_i=0$. Set the probability mass of this point $p$ to be the same $w_e$.
    
    \item[]Given an approximate MBNE of the above DFPA, for each bidder $i$,  we take her mixed strategy when her value is $v_i$  to use it as her mixed strategy  in the HDFPA, and hope that this gives an approximate NE of the original HDFPA.
\end{itemize}
\end{flushleft}

This construction, however, does not quite simulate the original auctions. In the HDFPA, only bidders in $S$ participate in the auction when $e=(S,w_e)$ is selected and thus, bidders $i\notin S$ can never win the item in this situation. In the DFPA, even when values of bidders drawn from $\calD$~are  $p$, all bidders are present, including those $i\notin S$ with $p_i=0$ who will always bid $0$ according to the no-overbidding rule. These bids of $0$ cannot simply be ignored: when every bidder $i \in S$ bids $0$, all bidders will split the item following the uniform tie-breaking rule.
As a result, the probability of a bidder $i\not\in S$ winning the item can be positive in general.

\begin{flushleft}\begin{itemize}
    \item[]\textbf{Attempt \#2.}
To bypass the subtle issue above, we ``\emph{shift}'' both values and bids in the construction as follows. In the DFPA, the new bidding space is $B^+=\{0,b_0^+=1,b_1^+=2\},$ 
  where $b_0^+$ and $b_1^+$ correspond to the $0$ and $1$ in the original bidding space $B=\{0,1\}$ of the HDFPA, after being shifted up by $1$.
For each hyperedge $e = (S, w_e)\in E$, we similarly add a point $p$ with the same mass $w_e$ to $\calD$ in DFPA but $p$ is also shifted up by $1$ for bidders $i\in S$: %
 every bidder $i\in S$ 
 has value $p_i=v_i':=v_i+1$ and every bidder $i\notin S$ still has $p_i=0$ in $p$.

\item[]
Given an approximate MBNE of the above DFPA, for each bidder $i$, we \emph{hope} that when $i$ gets value $v_i'$, her mixed strategy is supported on $\{b_0^+,b_1^+\}$ only. 
Assuming this is the case, for each bidder $i$, we take her mixed strategy when her value is $v_i'$ and use it as her mixed strategy in the HDFPA after shifting down the bids by $1$ (so $b_0^+$ and $b_1^+$ correspond to $0$\\ and $1$, respectively).
We hope that this gives an approximate NE of the original HDFPA.
\end{itemize}\end{flushleft}

 Suppose for a moment that our wish comes true, 
 that any bidder $i$ in an approximate MBNE of the DFPA can only bid $\{b_0^+,b_1^+\}$ when her value is $v_i'$.
 We show that in this case, auctions in  the HDFPA are simulated perfectly in the DFPA, and the reduction would go through smoothly.
To see this, note that when $p$ is drawn in the DFPA, bidders $i\notin S$ have value $0$ and will always bid $0$. Given that all bidders $i\in S$ only bid on $\{b_0^+,b_1^+\}$, bidders $i\notin S$ can never win the item. Hence, we can ignore them from the auction and thus, only bidders in $S$ participate in the auction, just like in the HDFPA when $S$ is drawn. Furthermore, since we shifted both the bids and values up by $1$ in the DFPA, the utility of each bidder $i\in S$  is also preserved. %

The challenge of course is that we cannot simply force a bidder to bid $\{b_0^+,b_1^+\}$ only when her value is $v_i'$.
Nonetheless, it turns out that we can make some small changes to the construction in \textbf{Attempt \#2} to ensure that, at any approximate MBNE of the DFPA, every bidder $i$ only bids $0$ with a small probability when her value is $v_i'$.
This turns out to be good enough for the reduction to go through. 
The final construction is presented in \Cref{sec:finalconst}.

\section{Preliminaries}

\noindent\textbf{Notation.} Given a positive integer $n \geq 1$, we let $[n]$ denote the set $\{1, \ldots, n\}$. For a distribution $\calD$, we write $\supp(\calD)$ to denote its support and $\calD(v)$ to denote the probability mass of  $v\in \supp(\calD)$. For a finite set $X$, we denote by $\Delta(X)$ the set of all probability distributions over $X$. 

Given a tuple ${a}=(a_1, \ldots, a_k)$ and $i \in [k]$, we write $a_{-i}:=(a_1, \ldots, a_{i-1}, a_{i+1}{,} \ldots, a_k).$

\subsection{Discrete First-Price Auction with Correlated Priors}\label{sec:defDFPA}

A discrete first-price auction (DFPA) with correlated priors is given by a tuple $A=(\calN, B, {\calV}, \calD)$, where $\calN=[n]$ is the set of bidders, $B=\{b_0, \ldots, b_r\} \subseteq [0,1]$ with $0=b_0< \cdots< b_r \leq 1$ is their shared bidding space, and $\calD$ is a joint discrete probability distribution over $[0,1]^n$. The bidders' values are drawn according to  $\calD$, which is public and is known to every bidder: as $v\sim \calD$, bidder $i$'s value is set to $v_i$, which is private and only bidder $i$ knows. 
Finally, ${\calV}= (V_1, \ldots, V_n)$, where $V_i$ is the set of
  possible values of bidder $i$ in $\supp(\calD)$. We write $\calD_i$ to denote the marginal distribution of $\calD$ on bidder $i$'s value. So $V_i=\supp(\calD_i)$.
In an input instance $A=(\calN,B,\calV,\calD)$ of the problem, 
the bidding space $B$ is given explicitly by listing all bids $b\in B$; the distribution $\calD$ is also given explicitly as a list of points in $[0,1]^n$ along with their probabilities. 
All numerical parameters are rational and encoded in binary.

The game (with incomplete information) proceeds as follows.
First, a value vector $v$ is drawn from $\calD$ and each bidder $i$ receives its (private) value $v_i$.
Given value $v_i$, bidder $i$ picks a bid $b_i\in B$ according to a mixed strategy $\beta_i: V_i\rightarrow \Delta(B)$ by drawing $b_i\sim \beta_i(v_i)$. (Here $\beta_i$ is a pure strategy if $\beta_i:V_i\rightarrow B$, in which case bidder $i$ bids $\beta_i(v_i)$ without flipping coins.)

Once all bids $b_i$ are submitted, 
  a first-price auction on a single item is held.
The bidder who submitted the highest bid wins the item and pays her bid. When there is a tie for the highest bid, the winner is chosen according to {a} \emph{uniform tie-breaking rule}.  In more {detail}, given ${b} \in B^n$, we let 
$$M(b)=\max_{i\in [n]}b_i\quad\text{and}\quad h(b)=\big|\{i\in [n]:b_i=M(b)\}\big|{.}$$ For each bidder $i\in [n]$, her utility is then defined as
\[
u_i({b;v_i})=
\begin{cases}
 (v_i-b_i)\big/h(b) &\text{if } b_i=M(b)\\[0.2ex]
0 & \text{otherwise}
\end{cases}{.}
\]

\noindent\textbf{No overbidding:}
We assume that all bidders follow the no overbidding rule, i.e., a bidder never submits a bid that is higher than her value. 
Formally this means that 
  $\beta_i(v_i)$ can only be supported on $\{b\in B: b\le v_i\}$. 
This {is} a common assumption used in the literature   \cite{MASKIN2003395,LEBRUN2006131,ChenPeng2023,filosratsikas2024computationequilibriadiscretefirstprice}, which rules out unnatural equilibria. Indeed, submitting a bid higher than your value always {leads} to non-positive {utility, and} is thus weakly dominated by bidding $0$.
\medskip

\noindent
\textbf{Expected utility:} 
Given a mixed strategy profile $\beta=(\beta_1,\ldots,\beta_n)$, 
  consider bidder $i$ and one of her values $v_i\in V_i$. 
Her expected utility given value $v_i$ when bidding $b\in B$ is
$$U_{i}(b,\beta_{-i};v_i):=\E\limits_{{\mathbf{v}}_{-i} \sim \calD_{i|v_i} } \E_{{\mathbf{b}}_{-i} \sim \beta_{-i}({\mathbf{v}}_{-i})}\big[u_i(b,{\mathbf{b}}_{-i};v_i)\big],$$
where $\calD_{i|v_i}$ is the distribution of $v_{-i}$ from $v\sim \calD$ conditioning on the $i$-th coordinate being $v_i$. The expression above can be understood by the following process: First, we draw the values  ${\mathbf{v}}_{-i}$ for the other bidders (conditioned on bidder $i$ having value $v_i$). Next, we apply the other {bidders'} bidding strategy profile $\beta_{-i}$ on   ${\mathbf{v}}_{-i}$ to get their bids ${\mathbf{b}}_{-i}$. Finally we compute the utility of bidder $i$ given the bid profile $(b,{\mathbf{b}}_{-i})$ (and that she has value $v_i$).

Now, if bidder $i$ uses a mixed strategy $\beta_i:V_i \to \Delta(B)$, then her expected utility given value $v_i$ and strategy $\beta_i$ (with respect to $\beta_{-i}$) is as follows:
$$U_{i}(\beta_i,\beta_{-i}; v_i):=\E_{{\mathbf{b}} \sim \beta_i(v_i)}U_i({\mathbf{b}},\beta_{-i};v_i){.} $$

We are ready to define approximate Bayes-Nash equilibria of a DFPA with correlated priors:

\begin{definition}[Approximate  Mixed Bayes-Nash Equilibrium]\label{def:MBNEinDFPA}
    Let $\eps \geq 0$ and let $A=(\calN,B,\calV,\calD)$ be a DFPA with correlated priors. 
 A mixed strategy profile ${\beta}=(\beta_1, \ldots, \beta_n)$ is an $\eps$-approximate mixed Bayes-Nash equilibrium (MBNE) of $A$ if for any $i \in [n]$, $v_i\in V_i$ and  $b \in B$,
    $$U_{i}(\beta_i,\beta_{-i}; v_i) \geq U_{i}(b,\beta_{-i};v_i) -\eps{.} $$
    In other words, if bidder $i$ changes her strategy $\beta_i(v_i)$ to only bidding $b\in B$, then conditioned on her value being $v_i$, her expected utility could increase by at most $\eps$. 

    When $\beta_i$'s are  pure strategies, $\beta$ is an $\eps$-approximate pure Bayes-Nash equilibrium (PBNE).
\end{definition}

We are interested in the problem of computing an $\eps$-approximate MBNE of a DFPA with~correlated priors: given any $A=(\calN,B,\calV,\calD)$, return an $\eps$-approximate MBNE $\beta=(\beta_1,\ldots,\beta_n)$ of $A$, by listing explicitly probabilities in each distribution $\beta_i(v_i)$, $i\in [n]$ and $v_i\in V_i$.  \medskip

\noindent
\textbf{Normalization of bids and values:} In the proof of \Cref{thm:DFPA_ppad_hard} in \Cref{sec:maintheorem}, for convenience, we will use $B=\{0,1,2\}$ and a value distribution $\calD$ supported in $[0,5]^n$.
We show that the problem~of computing an $\eps$-approximate MBNE of such a game (under the same  \Cref{def:MBNEinDFPA} with additive error $\eps$) is \PPAD-hard for some constant $\eps>0$.
We can normalize by dividing both bids and values by $5$ so that the bidding space $B$ becomes $\{0,1/5,2/5\}$ (as in the statement of \Cref{thm:DFPA_ppad_hard}) and $\calD$ is supported in $[0,1]^n$.
It follows that the problem of computing an $(\eps/5)$-approximate MBNE, when bids and values are all within $[0,1]$, is \PPAD-hard.

\subsection{Hypergraph Discrete First-Price Auctions}\label{sec:defHDFPA}

A \emph{hypergraph discrete first-price auction} (HDFPA) is a normal-form game specified succinctly by~a tuple $H=(\calN,B,v,E)$ as follows.
First, $\calN=[n]$ is {the} set of bidders. Each bidder $i$ has a~value~$v_i$ $\in [0,1]$, with $v=(v_1,\ldots,v_n)$ which is public to all bidders.
The bidders are players in the~normal-form game.
Next, the bidding space $B \subseteq [0,1]$ is a finite set of nonnegative numbers, which always contains $0$.  
Each bidder $i\in [n]$ plays the game by picking a mixed strategy $\sigma_i\in \Delta(B)$.\medskip

\noindent\textbf{No overbidding:}
Similarly we will assume the \emph{no overbidding rule}: $\sigma_i$ is supported on bids in $B$ that are no larger than $v_i$ only. So bidder $i$ can 
only use a mixed strategy $\sigma_i\in \Delta(B_i)$, where $B_i:=$ $\{b\in B: b\le v_i\}$. As $0\in B$ and $v_i\ge 0$,   $B_i$ is always nonempty for every bidder $i\in [n]$.\medskip

Finally, we define utilities (or payoffs) of bidders in the game obtained from an action~profile $b\in B^n$, which are specified by the last component $E$. Here $E$ is a multiset of pairs $e=(S,w_e)$ where $S\subseteq [n]$ is a (non-empty) hyperedge and $w_e \in \mathbb{Q}_{>0}$ is its weight.
\emph{Different from the introduction}, we do \emph{not} require the $w_e$'s to sum to $1$; this is for the convenience of our construction and its analysis later in \Cref{sec:mainreduction}. 
We assume that every bidder belongs to at least one hyperedge in $E$.

We will refer to each $e\in E$ as a weighted hyperedge (auction) and $E$ as a weighted hypergraph (of auctions).
Each $e=(S,w_e)\in E$ should be viewed as a single-item first-price auction held~among bidders in $S$ only.
And as it becomes clear below, the utility of a bidder $i\in S$ obtained from this auction will be scaled multiplicatively by the weight $w_e$.
For each bidder $i\in [n]$, we write 
$$
W_i:=\sum_{(S,w_e)\in E:\hspace{0.05cm}i\in S} w_e
$$
to denote the total weight of hyperedges in $E$ that contain bidder $i$.
We have $W_i>0$ because it is assumed that every bidder belongs to at least one hyperedge in $E$.

Given an action profile $b\in B^n$ and a weighted hyperedge $e=(S,w_e)\in E$,
let 
$$M_e(b)=\max_{i\in S}b_i\quad\text{and}\quad h_e(b)=\big|\{i\in S:b_i=M_e(b)\}\big|.$$ 
Uniform tie-breaking gives bidder $i$ the following utility before being scaled by the weight $w_e$:
\[
u_i^e(b)=
\begin{cases}
 (v_i-b_i)\big/h_e(b) &\text{if\ }i\in S\ \text{and}\ b_i=M_e(b)\\[0.2ex]
0 & \text{otherwise}
\end{cases}{.}
\] %
Note that $u_i^e(b)$ is always $0$ if $i\notin S$ given that only bidders in $S$ participate in the auction.
The overall utility of bidder $i\in [n]$ under an action profile $b\in B^n$ is then defined as
$$
u_i(b)=\sum_{e=(S,w_e)\in E} w_e\cdot u_i^e(b).
$$
This finishes the description of the normal-form game specified by an HDFPA $H=(\calN,B,v,E)$.

Since this is a normal-form game, the existence of Nash equilibria follows from Nash's theorem \cite{Nash}.
Below we define $\eta$-NE of an HDFPA $H$. It is slightly different from the standard definition of~approximate NE, to simplify the presentation of \Cref{sec:maintheorem} where we reduce HDFPA to DFPA.

In more {detail}, given any mixed strategy profile $\sigma$, the expected utility of bidder $i$ is defined as
$$
U_i(\sigma):=\bE_{{\mathbf{b}}\sim \sigma} \big[u_i({\mathbf{b}})\big]
$$
and $\sigma$ (with $\sigma_i\in \Delta(B_i)$) is an  NE of $H$ if for every bidder $i\in [n]$ and any action $b_i\in B_i$, we have
$$
U_i(\sigma)\ge U_i(b_i,\sigma_{-i}),
$$
where $(b_i,\sigma_{-i})$ denotes the profile obtained from $\sigma$ by making bidder $i$ pick $b_i$ with probability $1$.

Allowing an additive error that is relative to $W_i$ for each bidder $i\in [n]$ (i.e., the total weight of hyperedges that contain $i$),
we define $\eta$-NE of an HDFPA as follows:

\begin{definition}[Approximate Nash Equilibria]\label{def:normalized-ne}
Let $\eta \geq 0$ and let $H=(\calN,B,v,E)$ be an HDFPA. A mixed strategy profile $\sigma=(\sigma_1, \ldots, \sigma_n)$, where $\sigma_i\in \Delta(B_i)$ for each $i$, is an $\eta$-NE of $H$ if for every bidder $i \in [n]$ and every $b_i\in B_i$, we have
\[
U_i(\sigma)\geq U_i(b_i,\sigma_{-i})-\eta W_i.
\]
We say $\sigma$ is an $\eta$-PNE (pure Nash equilibrium) of $H$ if $\sigma$ is a pure strategy profile.
\end{definition}

In the above, we consider an additive error of $\eta W_i$ because, as it becomes clear in \Cref{sec:maintheorem}, it allows us to connect better with MBNE of the DFPA we build there\footnote{There are other reasons why multiplying $\eta$ by $W_i$ is the ``right'' definition of approximate NE. For instance, if we multiply all the weights $w_e$ of an HDFPA by the same parameter $c>0$, it does not change the set of $\eta$-NE.}. In particular, we make the following observation:

 \begin{observation}[Alternative view {of} $\eta$-NE]\label{obs:alternative}
    Consider an HDFPA $H=(\calN,B,v,E)$ and a bidder~$i$; let $E_i$ be the distribution over edges $e=(S,w_e)$ with $i \in S$ (so $e$ is drawn with probability $w_e/W_i$). Then $\sigma$ is an $\eta$-NE of $H$ if and only if for every bidder $i\in [n]$ and $b_i \in B_i$, we have
     $$
\underset{\substack{\mathbf{e} \sim E_i, {\mathbf{b}} \sim \sigma \\}}{\E}\big[u^{\mathbf{e}}_i({\mathbf{b}})\big] \geq \underset{\substack{\mathbf{e} \sim E_i, {\mathbf{b}}_{-i} \sim \sigma_{-i}}}{\E}\big[u^{\mathbf{e}}_i(b_i, {\mathbf{b}}_{-i})\big]-\eta{.}$$
\end{observation}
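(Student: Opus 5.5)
The plan is to prove this by direct algebraic manipulation: divide the defining inequality of \Cref{def:normalized-ne} by $W_i>0$ and recognize both sides as expectations over $E_i$. Fix a bidder $i$ and a bid $b_i\in B_i$. Since $\sigma$ is a product distribution, linearity of expectation gives
\[
U_i(\sigma)=\sum_{e=(S,w_e)\in E} w_e\,\E_{\mathbf{b}\sim\sigma}\big[u_i^e(\mathbf{b})\big]
\]
and the analogous expression for $U_i(b_i,\sigma_{-i})$. The first step is to discard the hyperedges not containing $i$: by definition $u_i^e(b)=0$ whenever $i\notin S$, so both sums may be restricted to edges $e=(S,w_e)$ with $i\in S$.

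Next, I will rewrite the restricted sum as an expectation over $E_i$. The weights $w_e/W_i$ of the edges containing $i$ sum to $1$ and define exactly the distribution $E_i$. Hence
\[
U_i(\sigma)=W_i\cdot\E_{\mathbf{e}\sim E_i,\mathbf{b}\sim\sigma}\big[u_i^{\mathbf{e}}(\mathbf{b})\big],
\]
where the edge is drawn independently of the bids, which is legitimate because bids are chosen before the edge is drawn. Likewise,
\[
U_i(b_i,\sigma_{-i})=W_i\cdot\E_{\mathbf{e}\sim E_i,\mathbf{b}_{-i}\sim\sigma_{-i}}\big[u_i^{\mathbf{e}}(b_i,\mathbf{b}_{-i})\big].
\]
Since $E$ is a multiset, parallel copies of the same edge are treated as separate atoms of $E_i$, or equivalently their weights are merged; either way the identity holds.

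Finally, substitute these identities into the condition $U_i(\sigma)\ge U_i(b_i,\sigma_{-i})-\eta W_i$ and divide by $W_i>0$. This yields precisely the stated inequality, and since each step is reversible the two conditions are equivalent for every $i$ and every $b_i\in B_i$. There is no real obstacle here; the only point needing care is that $W_i>0$, which holds by the standing assumption that every bidder lies in some hyperedge.
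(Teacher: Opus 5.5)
Your proposal is correct and is essentially the paper's intended argument. The paper states this as an observation without a written proof, since it follows directly from \Cref{def:normalized-ne} by expanding $U_i$ as a $w_e$-weighted sum over hyperedges, dropping those not containing $i$, and dividing by $W_i>0$, exactly as you do.
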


In \Cref{thm:HDFPA_ppad_hard}, we are interested in the problem of computing an $\eta$-NE of an HDFPA. The input $H=(\calN,$ $B,v,E)$ is given as follows: numbers in $B$ and $v$ are given explicitly; $E$ is also~given~explicitly by listing all of its hyperedges $(S,w_e)$. 
The goal is to compute an $\eta$-NE $\sigma=(\sigma_1,\ldots,\sigma_n)$ of $H$, by listing explicitly probabilities in each distribution $\sigma_i\in \Delta(B_i)$.

\medskip
\noindent

\noindent
\textbf{Normalization of bids and values:} In the proof of \Cref{thm:HDFPA_ppad_hard} in \Cref{sec:mainreduction}, for convenience,~we will use bidding space $B=\{0,1\}$ but allow bidders to have $v_i\in [0,4]$.
We show that the problem~of computing an $\eta$-NE of such a game (under    \Cref{def:normalized-ne} with additive errors $\eta W_i$) is \PPAD-hard for some constant $\eta>0$.
We can normalize by dividing both bids and values~by $4$ so that $B$ becomes $\{0,1/4\}$ (as in \Cref{thm:HDFPA_ppad_hard}) and all $v_i\in [0,1]$.
It follows that the problem of computing an $(\eta/4)$-NE, when bids and values are in $[0,1]$, is \PPAD-hard.

\medskip\noindent
\textbf{A point of notation:} In some of our proofs, the bidding space $B$ is $\{0,1\}$. In this case we usually denote a mixed strategy profile by $p$ (rather than $\sigma$) and write  $\p{x}\in [0,1]$ {to denote} the probability of a bidder $x$ bidding $1$ in the profile (so $1-\p{x}$ gives the probability that she bids $0$).

\subsection{Generalized Circuits}\label{sec:Gcircuit}

We {recall} a simplified version of  generalized circuits \cite{CDT09} from \cite{FGHLP23,BudishEtAl2023}:

\begin{definition}[Simplified Generalized Circuits]\label{def:Gcircuit} 
A (simplified) generalized circuit $C=(V,\calT)$ is 
  a pair where $V$ is a set of $n$ 
  nodes and $\calT$ is a collection of $m \leq n$
  \emph{gates}.
Each gate $T\in \calT$ is~either~a SUM gate of the form $T=(G_+,u,v,w)$ with three nodes $u,v$ and $w$, or a NOT gate of the form $T=(G_{\neg},u,w)$ with two nodes $u$ and $w$.
In both cases, $w$ is referred to as the output node of gate $T$ and $u,v$ (or just $u$) are input nodes of $T$.
{The output node $w$ is always distinct\footnote{While \cite{BudishEtAl2023} allows the output node $w$ to be used as an input in the same gate, one can create a new node $w'$, use $w'$ as the output instead of $w$, and add two NOT gates to loop the value of $w'$ back to $w$. This way the same hardness result in \Cref{thm:source} holds under this assumption, with a slightly worse approximation parameter $\eps$.} from the input node(s) but in a SUM gate, $u=v$ is allowed. 
Every node $v$ is the output of at most one gate $T_v$. Given $\eps > 0$, an $\eps$-{approximate} solution
to $C$ is an assignment $a \in [0, 1+\eps]^V$
such that for all $v$,
\begin{itemize}
    \item If $T_v=(G_+,u_1,u_2,v)$ then we must have $a_{v}=\min(a_{u_1}+a_{u_2},1) \pm \eps$; and
     \item If $T_v=(G_\neg,u,v)$ then we must have $a_{v}=(1-a_{u}) \pm \eps$.
\end{itemize}}
\end{definition}

We let $\eps$-$\Gcircuit$  denote the problem of computing an $\eps$-approximate solution to a simplified generalized circuit, which is known to be \PPAD-hard even when every node is used as an input of at most $2$ gates. {(In particular, $u$ being used as both inputs in a SUM gate is counted once.)}

\begin{theorem}[Theorem $3$ of \cite{BudishEtAl2023}]\label{thm:source}
There exists a constant $\eps>0$ such that $\eps$-$\Gcircuit$ with gates  $\{G_+,G_{\neg}\}$ is $\PPAD$-hard even when every node is used as an input of at most two gates.
\end{theorem}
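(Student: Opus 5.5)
The statement is Theorem 3 of \cite{BudishEtAl2023}, with one adjustment: \Cref{def:Gcircuit} requires the output node of every gate to differ from its input nodes. My plan is to take the hardness result of \cite{BudishEtAl2023} as a black box and remove this discrepancy with a local polynomial-time transformation. The transformation should keep the fan-out bound of two and lose only a constant factor in the approximation parameter.

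\textbf{The transformation.} Let $C=(V,\calT)$ be a hard instance from \cite{BudishEtAl2023}, hard for some constant $\eps'>0$. Suppose a gate $T_w$ has output $w$ and also uses $w$ as an input. I introduce two fresh nodes $w'$ and $z$ and make three changes:
\begin{itemize}
\item $T_w$ now outputs to $w'$, while keeping its inputs (including $w$) unchanged;
\item add the gate $(G_\neg,w',z)$;
\item add the gate $(G_\neg,z,w)$.
\end{itemize}
I do this for every such gate and call the result $C'$. The size grows by at most a constant factor.

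\textbf{Structural conditions.} I check each condition of \Cref{def:Gcircuit} on $C'$.
\begin{itemize}
\item Each node is still the output of at most one gate: $w'$ is output only by $T_w$, $z$ only by the first NOT, and $w$ only by the second NOT.
\item No gate has its output among its inputs: $w'$ and $z$ are fresh, and the NOT gates have distinct endpoints.
\item Input usage is unchanged for old nodes, because $w$ still feeds exactly the gates it fed before. The new nodes $w'$ and $z$ are each used once.
\item The condition $m\le n$ is preserved, since each modification adds two nodes and two gates.
\end{itemize}

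\textbf{Recovering a solution for $C$.} Let $a$ be an $\eps'$-approximate solution of $C'$, and restrict it to $V$. The two NOT gates give
\[
a_z = 1-a_{w'} \pm \eps' \quad\text{and}\quad a_w = 1-a_z \pm \eps',
\]
so $a_w = a_{w'} \pm 2\eps'$. The modified gate $T_w$ gives $a_{w'} = f(\text{inputs of } T_w) \pm \eps'$, where $f$ is $\min(\cdot+\cdot,1)$ or $1-\cdot$. Both are evaluated at the original input values, which include $a_w$ itself, so no Lipschitz argument is needed. Combining the two bounds, $a_w = f(\text{inputs}) \pm 3\eps'$. Every unmodified gate is satisfied within $\eps'$. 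All values lie in $[0,1+\eps']\subseteq[0,1+3\eps']$. Hence the restriction is a $3\eps'$-approximate solution of $C$.

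\textbf{Conclusion.} This gives a polynomial-time reduction, so the statement holds with $\eps=\eps'/3$. The only step needing care is checking that the fan-out bound and the single-output condition survive the rewiring. This is the main, but mild, point: it holds because $w$'s outgoing uses are untouched and only its incoming gate is replaced.
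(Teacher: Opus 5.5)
Your proposal is correct and matches the argument the paper gives in the footnote to \Cref{def:Gcircuit}: you redirect the offending gate's output to a fresh node $w'$ and feed it back to $w$ through two NOT gates (via your intermediate node $z$), which costs only a constant factor in the approximation parameter. One small notational point: you use $\eps'$ both as the hardness constant of $C$ and as the accuracy of the solution of $C'$; the clean statement is that a $\delta$-approximate solution of $C'$ restricts to a $3\delta$-approximate solution of $C$, so taking $\delta=\eps'/3$ gives the claim.
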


\section{Proof of \Cref{thm:HDFPA_ppad_hard}}\label{sec:mainreduction}

\subsection{PPAD-hardness}
Let $\eps$ be a positive constant such that $\eps$-$\Gcircuit$, with SUM and NOT gates (as described above in \Cref{thm:source}), is \PPAD-hard. In this subsection we prove the hardness part of \Cref{thm:HDFPA_ppad_hard} by giving a polynomial-time reduction from $\eps$-$\Gcircuit$ to the problem of finding an $\eta$-NE of a HDFPA, for some sufficiently small positive constant $\eta$ to be specified later.

The HDFPA $H=(\calN,B,v,E)$ we build in the reduction satisfies that
$$
v_i\in \left\{\frac{1}{2}, 2, 3,4\right\},
\quad \text{for every bidder $i$,}
$$
$B=\{0,1\}$,
and every weighted hyperedge $(S,w_e)\in E$ has $|S|\le 4.$ \Cref{thm:HDFPA_ppad_hard} will then follow by dividing all values and bids by $4$ using the normalization argument in \Cref{sec:defHDFPA}. 

Before giving an overview of the reduction, we start by introducing some conventions that will make the construction easier to follow. First, we will reserve symbols $u,v$ and $w$ to denote nodes in the starting generalized circuit $C$, and symbols $x,y,z$ and $h$ to denote bidders in the HDFPA $H$ that we build from $C$.
(So the total weight of hyperedges that contain a bidder $x$ will be denoted by $W_x$.) 
The symbol $n$ will be used to denote the number of nodes in $C$; the number of bidders in the HDFPA $H$ will be linear in $n$, as it becomes clear at the end of the construction.

We will use $p$ to denote a mixed strategy profile of a HDFPA with bidding space $B=\{0,1\}$, where $\p{x}\in [0,1]$ denotes the probability of a bidder $x$ bidding $1$ in the profile (so bidder $x$ bids~$0$ with probability $1-\p{x}$ in the profile $p$). We still use ${\mathbf{b}}\sim p$ to denote the draw of an action profile ${\mathbf{b}}$ (where each bid takes a value in $B=\{0,1\})$ from $p$.

Given a mixed strategy profile $p$ and a bidder $x$, we use $(1,p_{-x})$ to denote the mixed strategy profile obtained from $p$ by making bidder $x$ bid $1$ with probability $1$ (or equivalently, changing $p(x)$ to $1$), and write $(0,p_{-x})$ to denote the profile obtained from $p$ by making $x$ bid $0$ with probability $1$
(or equivalently, changing $p(x)$ to $0$).
For each weighted hyperedge $e$ in the
  weighted hypergraph $E$ of $H$, we write 
$$
U_x^e(b^*,p_{-x}):= \bE_{{\mathbf{b}}_{-x}\sim p_{-x}} \big[w_e \cdot u_x^e(b^*,{\mathbf{b}}_{-x})\big]
$$
to denote the expected utility of bidder $x$ from $e$ when they bid $b^* \in \zo$ and remaining bidders bid according to $p$.
Given a subset $R\subseteq E$, we write $$
U_x^R(b^*,p_{-x})=\sum_{e\in R}U_x^e(b^*,p_{-x})
\quad \text{ and } \quad 
\Delta_{x}^R(p):=U_x^R(1,p_{-x})- U_x^R(0,p_{-x}).%
$$
Note $U_x^R(b^*,p_{-x})$ is the total expected utility bidder $x$ gets from weighted hyperedges in $R$ if she always bids $b^*$ while the remaining bidders bid according to $p_{-x}$. So $\Delta_{x}^R(p)$ captures the advantage of $x$ bidding $1$ against $0$, from weighted hyperedges in $R$ against $p_{-x}$.

We also write $\Delta_{x}(p)$ to denote $\Delta_{x}^E(p)$. 
The following claim follows directly from \Cref{def:normalized-ne}:

\begin{claim}\label{clm:binary-regret}
At any $\eta$-NE $p$ of a HDFPA $H$, the following holds for every bidder $x$:
\begin{align*}
\Delta_{x}(p)>0\implies (1-p(x))\Delta_{x}(p)\leq\eta W_x\quad\text{and}\quad
\Delta_{x}(p)<0\implies -p(x) \Delta_{x}(p)\leq\eta W_x.
\end{align*}
\end{claim}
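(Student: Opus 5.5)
The plan is to unfold \Cref{def:normalized-ne} for a single bidder $x$ with bidding space $B=\{0,1\}$. We rewrite her expected utility using linearity in her own mixed strategy. Since other bidders' strategies are fixed and the utility $U_x(\sigma)$ is an expectation over her own bid independent of $p_{-x}$, we have
\[
U_x(p)=p(x)\,U_x(1,p_{-x})+(1-p(x))\,U_x(0,p_{-x}),
\]
where $U_x(b^*,p_{-x})=U_x^E(b^*,p_{-x})$ is the sum over all weighted hyperedges. (Hyperedges not containing $x$ contribute $0$ to both terms, so summing over $E$ is harmless.)

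There is one subtlety about the no-overbidding rule: the deviation $b_i$ in \Cref{def:normalized-ne} ranges over $B_x$. In the construction all values lie in $\{1/2,2,3,4\}$. For bidders with value at least $1$, $B_x=\{0,1\}$. For a bidder with value $1/2$, $B_x=\{0\}$, so $p(x)=0$ is forced. In that case the first implication would need $\Delta_x(p)\le \eta W_x$, which is not a consequence of the NE condition. I would therefore state the argument for bidders with $1\in B_x$; for bidders with $B_x=\{0\}$, the second implication holds trivially since $p(x)=0$. The first one I would either note is only invoked for bidders with $1\in B_x$, or handle by the convention that such bidders are excluded. The implicit reading is that the claim concerns bidders who can bid $1$, and I would make that explicit.

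For a bidder with $B_x=\{0,1\}$, apply the NE inequality with the deviation $b=1$. This gives $U_x(p)\ge U_x(1,p_{-x})-\eta W_x$. Substituting the decomposition above yields
\[
(1-p(x))\bigl(U_x(0,p_{-x})-U_x(1,p_{-x})\bigr)\ge -\eta W_x,
\]
that is, $(1-p(x))\Delta_x(p)\le \eta W_x$. This holds unconditionally, so in particular when $\Delta_x(p)>0$. Symmetrically, the deviation $b=0$ gives $p(x)\bigl(U_x(1,p_{-x})-U_x(0,p_{-x})\bigr)\ge -\eta W_x$, that is, $-p(x)\Delta_x(p)\le\eta W_x$, which is the second implication.

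There is no real obstacle here; the only care needed is the linearity decomposition and the $B_x$ edge case.
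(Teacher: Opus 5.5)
Your main argument is the same as the paper's. You decompose $U_x(p)=p(x)\,U_x(1,p_{-x})+(1-p(x))\,U_x(0,p_{-x})$ and apply the $\eta$-NE inequality to the pure deviations $1$ and $0$. Your algebra is correct, and you rightly note that both inequalities hold regardless of the sign of $\Delta_x(p)$ whenever the corresponding deviation is allowed.

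The one weak spot is your treatment of bidders with $B_x=\{0\}$, a case the paper's one-line proof passes over silently. You correctly see that the NE condition alone does not give the first implication for such bidders. But instead of restricting or reinterpreting the claim, you should observe that the first implication is vacuous for them.

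\begin{itemize}
\item With $B=\{0,1\}$, having $B_x=\{0\}$ forces $v_x<1$.
\item Since $1$ is the top bid, bidding $1$ makes $x$ a highest bidder in every hyperedge $e=(S,w_e)$ containing $x$. Hence $U_x^e(1,p_{-x})=w_e(v_x-1)\,\mathbb{E}[1/(1+K)]<0$, where $K$ is the number of other bidders in $S$ bidding $1$.
\item On the other hand, $U_x^e(0,p_{-x})\ge 0$.
\item Because $x$ lies in at least one hyperedge, $\Delta_x(p)<0$, so the hypothesis $\Delta_x(p)>0$ never holds.
\item The second implication is trivial, since $p(x)=0$.
\end{itemize}

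Adding this observation makes your proof cover every bidder, as the statement requires. Your restricted version would actually suffice for the paper's uses, which only apply the claim to bidders of value at least $2$, but it is not the claim as stated.
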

\begin{proof}
The utility of bidder $x$ when bidders use profile $p$ is $$U_x(p)=p(x)U_x(1,p_{-x})+(1-p(x))U_x(0,p_{-x}).$$ If $\Delta_{x}(p)>0$, changing the mixed strategy of $x$ in $p$ to pure bid $1$ would gain $(1-p(x))\Delta_{x}(p)$.  So the first part of the claim follows from the definition. The other part follows similarly.
\end{proof}

\subsection{Overview of the Reduction}\label{sec:overviewreduction}

Let $C=(V,\calT)$ be a generalized circuit  with $n$ nodes, where every node is used as the input of at most $2$ gates, and $|\calT|=m\le n$.
Given~$C$,
  our goal is to build a HDFPA $H$ in polynomial time such that any $\eta$-NE $p$ of $H$, for some positive constant $\eta$, can be used to recover an $\eps$-approximate solution to $C$ in polynomial time.

We start the construction by adding the following $2n$ bidders to $H$:
For each node $u\in V$, we 
\begin{flushleft}\begin{enumerate}
\item add one bidder $\yy{u}$ with value $4$; we will refer to $\yy{u}$ as the \emph{\color{red}output} bidder for node $u$; and
\item add one {bidder} $\xx{u}$ with value $2$;
  we will refer to $\xx{u}$ as the  \emph{\color{blue}input} bidder for node $u$.
\end{enumerate}\end{flushleft}
We will refer to these $2n$ bidders as the \emph{main} bidders in $H$, while all other bidders that are added later (with values either $1/2$ or $3$) are referred to as \emph{auxiliary} bidders.

Looking ahead, after adding auxiliary bidders and adding  weighted hyperedges to the weighted hypergraph $E$ of $H$, 
our goal is to make sure that
  the following holds for $H$ at the end:
\begin{quote}
\centering\emph{Given any $\eta$-NE $p$ of $H$, 
  assigning $\p{\yy{u}}$ to each\\ node $u\in V$ must be an $\eps$-approximate solution to $C$.}
\end{quote}

This is achieved by adding the following \emph{gadgets} to $H$, each enforcing one constraint over a few main bidders that holds in any $\eta$-NE of $H$
  (we will give more details on the exact meaning of the phrase ``\emph{adding a gadget to $H$}'' after this paragraph):
\begin{flushleft}\begin{enumerate}
\item For each node $u\in V$, add one \emph{COPY} gadget  (see \Cref{sec:copygadget}) from $\yy{u}$ to $\xx{u}$,
  denoted by $R(=,\yy{u},\xx{u})$,
making sure that every $\eta$-NE $p$ of $H$ satisfies $$\p{\xx{u}}\approx \p{\yy{u}};$$
\item For each gate $T=(G_+,u,v,w)$ in $\calT$, add a \emph{SUM} gadget (see \Cref{sec:addgadget}) from $ {\xx{u}}$ and $ \xx{v}$ to $\yy{w}$, denoted by $R(+,  {\xx{u} ,\xx{v} ,\yy{w}})$, making sure that every $\eta$-NE $p$ of $H$ satisfies $$\p{\yy{w}} \approx \min\left(\p{\xx{u}}+\p{\xx{v}},1\right);$$
\item For each gate $T=(G_{\neg},u,w)$ in $\calT$,  add a \emph{NOT} gadget (see \Cref{sec:compgadget}) from $\xx{u}$ to $\yy{w}$, denoted by $R(\neg,\xx{u},\yy{w})$, making sure that every $\eta$-NE $p$ of $H$ satisfies
$$
\p{\yy{w}}\approx 1-\p{\xx{u}},
$$
\end{enumerate}\end{flushleft}
where the approximations above all mean additive errors bounded by $O(\eta)$.
It is easy to verify (see \Cref{sec:piecestogether}) that, by setting $\eta$ to be a sufficiently small constant compared to $\eps$, $(\p{\yy{u}}:u\in V)$~from any $\eta$-NE of $H$ must be an $\eps$-approximate solution to $C$,
  which gives the desired reduction.

The challenge is how to design these three types of gadgets (COPY $=$, SUM $+$, NOT~$\neg$) so that they can work together, enforcing constraints that they are responsible for, while not~interfering with each other.
Towards this end, each gadget (which will be denoted by $R$) consists of not only a number of weighted hyperedges but also a number of auxiliary bidders who \emph{only} participate in hyperedges of $R$. In more {detail}, when we ``add a gadget $R$'' to $H$, we first 
\begin{flushleft}\begin{enumerate}
\item add one \emph{\color{OliveGreen}helper} bidder $\hh{R}$ with value $3$; 
\item add a certain number of bidders $\z{R,1},\z{R,2},\ldots$, where the number depends on the type of the gadget $R$. 
All these bidders have value $1/2$ and are so-called \emph{\color{Sepia}$0$-anchors} (because, due to the no overbidding rule, they have no choice but to bid $0$ with {probability} $1$).
We will refer to bidders $\hh{R},\z{R,1},\z{R,2},\ldots$ as \emph{the auxiliary bidders} of $R$; and then
\item add a number of weighted hyperedges to $E$ in $H$, depending on the type of the gadget, where each hyperedge added 
is a size-at-most-$4$ subset of the following bidders:
\begin{itemize}
\item the auxiliary bidders of $R$: the helper bidder $\hh{R}$ and the $0$-anchors $\z{R,1},\z{R,2},\ldots$; and 
\item \emph{the main bidders} of $R$, i.e., $\yy{u}$ and $\xx{u}$ when $R=R(=,\yy{u},\xx{u})$ is a COPY gadget; \\ $\xx{u},\xx{v}$ and $\yy{w}$ when $R=R(+,\xx{u},\xx{v},\yy{w})$ is a SUM gadget;
or $\xx{u}$ and $\yy{w}$\\ when $R=R(\neg,\xx{u},\yy{w})$ is a NOT gadget.%
\footnote{Note that we're using various colors to make it clear {whether} a bidder is an {\color{blue}input}, {\color{red}output}, {\color{OliveGreen}helper} bidder or a {\color{Sepia}zero-anchor}. Hence, whenever we talk {about} a bidder $x$ or $y$ (with no color), we mean a generic bidder who could be of any type.}
\end{itemize}
We will refer to these weighted hyperedges as \emph{the weighted hyperedges} of $R$;  
we will also abuse the notation to use $R$ to denote the set of weighted hyperedges of $R$ (so notation such as $\Delta_{x}^R$, given a gadget $R$ and a bidder $x$, is well defined).
\end{enumerate}\end{flushleft}

Before describing our first gadget, we record some simple calculations about $\Delta_{x}^e(p)$:

\begin{claim}\label{clm:simple1}
Let $x$ be a bidder in $H$ with value in $\{2,3,4\}$, and $e=(S,w_e)$ be a weighted hyperedge~in $E$ with $S=\{x\}$. Then for any mixed strategy profile $p$ of $H$, we have 
$\Delta_{x}^e(p)=-w_e$.
\end{claim}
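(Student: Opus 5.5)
The claim is a direct computation from the definitions, so the plan is simply to evaluate $U_x^e(1,p_{-x})$ and $U_x^e(0,p_{-x})$ separately and subtract. The key point is that the hyperedge $S=\{x\}$ has $x$ as its only participant. Hence for every action profile $b$ we have $M_e(b)=b_x$ and $h_e(b)=1$, and the bids ${\mathbf{b}}_{-x}$ of the other bidders play no role at all in $u_x^e$. In particular, the expectation over ${\mathbf{b}}_{-x}\sim p_{-x}$ in the definition of $U_x^e(b^*,p_{-x})$ is the expectation of a constant.

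Concretely, when $x$ bids $b^*\in\{0,1\}$, she is the unique highest bidder in $e$. So $u_x^e(b^*,{\mathbf{b}}_{-x})=(v_x-b^*)/1$, which gives $U_x^e(b^*,p_{-x})=w_e(v_x-b^*)$. Here I need bidding $1$ to be a legal action, i.e.\ $1\in B_x$ under the no-overbidding rule. This is exactly why the hypothesis $v_x\in\{2,3,4\}$ is needed: it ensures $v_x\ge 1$, so both $(1,p_{-x})$ and $(0,p_{-x})$ are valid profiles and $\Delta_x^e(p)$ is well defined.

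Subtracting then gives $\Delta_x^e(p)=w_e(v_x-1)-w_e(v_x-0)=-w_e$, independent of $p$ and of the particular value $v_x$. There is no real obstacle here. The only care needed is the observation that the tie count $h_e$ equals $1$ because $|S|=1$, and the remark that the value lies in $\{2,3,4\}$ so that bidding $1$ is permitted.
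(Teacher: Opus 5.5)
Your proposal is correct and is the same direct computation the paper uses: since $S=\{x\}$, you get $U_x^e(1,p_{-x})=w_e(v_x-1)$ and $U_x^e(0,p_{-x})=w_e v_x$, and subtracting gives $-w_e$. Your remark that $v_x\ge 1$ makes bidding $1$ legal under no-overbidding is a harmless addition; the identity itself holds formally for any $v_x$.
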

\begin{proof}
Let $v_x\in \{2,3,4\}$ be the value of $x$.
Because only bidder $x$ can participate in $e$ (as $S=\{x\}$), we have $U_x^e(1,p_{-x})={v_x\cdot w_e-w_e}$ and $U_x^e(0,p_{-x})={v_x\cdot w_e} $.
The claim follows.
\end{proof}

\begin{claim}\label{clm:delta_contrib}
Let $m$ be a nonnegative integer.
Let $x$ and $y$ be two bidders with
  values $v_x$ and $v_y$ from $\{2,3,4\}$ and $\z{1},\ldots,\z{m}$ be $m$ $0$-anchors in $H$.
Let $e=(S,w_e)$ be a weighted hyperedge in $E$ with $$S=\{x,y,\z{1},\ldots,\z{m}\}$$ (or with $S=\{x,y\}$ when $m=0$).
Then in any mixed strategy profile $p$ of $H$, we have 
    \[
    \Delta^{e}_x(p) = w_e\left(\left(\frac{v_x}{m+2} - \frac{v_x-1}{2}\right)\p{y} + v_x - 1 - \frac{v_x}{m+2}\right).
\]
\end{claim}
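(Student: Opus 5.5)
Since $S=\{x,y,\z{1},\ldots,\z{m}\}$ and every $0$-anchor has value $1/2<1$, the no overbidding rule forces each $\z{j}$ to bid $0$ with probability $1$. So the only randomness in $e$ comes from $y$, who bids $1$ with probability $\p{y}$ and $0$ with probability $1-\p{y}$; other bidders outside $S$ do not affect $u_x^e$. I will compute $U_x^e(1,p_{-x})$ and $U_x^e(0,p_{-x})$ by conditioning on $y$'s bid, and then subtract.

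For $x$ bidding $1$, there are two cases. If $y$ bids $1$, then $M_e=1$ and $h_e=2$ (the anchors bid $0$), so $x$ gets $(v_x-1)/2$. If $y$ bids $0$, then $x$ wins alone and gets $v_x-1$. Hence
\[
U_x^e(1,p_{-x})=w_e\left(\p{y}\cdot\frac{v_x-1}{2}+(1-\p{y})(v_x-1)\right).
\]
For $x$ bidding $0$, there are again two cases. If $y$ bids $1$, then $x$ loses and gets $0$. If $y$ bids $0$, all $m+2$ participants tie at $0$, so $x$ gets $v_x/(m+2)$. Hence
\[
U_x^e(0,p_{-x})=w_e(1-\p{y})\frac{v_x}{m+2}.
\]
This also covers $m=0$, where the tie is between $x$ and $y$ only.

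Subtracting and collecting the coefficient of $\p{y}$ gives
\[
\Delta_x^e(p)=w_e\left(\left(\frac{v_x}{m+2}-\frac{v_x-1}{2}\right)\p{y}+v_x-1-\frac{v_x}{m+2}\right),
\]
since the coefficient of $\p{y}$ is $\frac{v_x-1}{2}-(v_x-1)+\frac{v_x}{m+2}$ and the constant term is $(v_x-1)-\frac{v_x}{m+2}$.

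No step is a real obstacle; the proof is a routine computation. The only points needing care are the following:
\begin{itemize}
\item the $0$-anchors are forced to bid $0$ and so join the tie only when the maximum bid is $0$;
\item the values $v_x,v_y\ge 2$ allow both bids, which is where the hypothesis $v_x,v_y\in\{2,3,4\}$ is used.
\end{itemize}
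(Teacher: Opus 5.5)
Your proposal is correct and follows the paper's own proof: the $0$-anchors are forced to bid $0$, you compute $U_x^e(1,p_{-x})=w_e\bigl((1-\p{y})(v_x-1)+\p{y}\,\tfrac{v_x-1}{2}\bigr)$ and $U_x^e(0,p_{-x})=w_e(1-\p{y})\tfrac{v_x}{m+2}$ by conditioning on $y$'s bid, and subtracting gives the claimed formula.
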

\begin{proof}
The $0$-anchors always bid $0$.
If $x$ bids $1$, then either $y$ bids $1$ and they tie, or $x$ wins. Thus, $$U_x^{e}(1,p_{-x}) = w_e\left(\big(1-\p{y}\big)(v_x - 1) + \frac{v_x - 1}{2} \cdot \p{y}\right).$$ If bidder $x$ bids $0$, then either $y$ bids $1$, or every bidder in the auction ties and so $$U_x^{e}(0,p_{-x}) = w_e\big(1-\p{y}\big)\cdot \frac{v_x}{m+2}.$$ %
The claim follows by taking the difference.
\end{proof}
{
Finally, a last claim that will be useful is the following. \begin{claim}\label{claim:bound}
    Let $\x$ be an input bidder (so $\x$ has value $v=2$) and $p$ be a mixed strategy profile. Then for any set of {hyperedges} 
$R$ with $\x \in S$ for all $(S,w_e)=e \in R$, we have 
$$|\Delta_{\x}^R(p)| \leq \sum_{e \in R} w_e{.}$$
\end{claim}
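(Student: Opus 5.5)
By the triangle inequality and linearity, $\Delta_{\x}^R(p)=\sum_{e\in R}\Delta_{\x}^e(p)$. It therefore suffices to show that $|\Delta_{\x}^e(p)|\le w_e$ for every single weighted hyperedge $e=(S,w_e)$ containing $\x$. Write $\Delta_{\x}^e(p)=U_{\x}^e(1,p_{-\x})-U_{\x}^e(0,p_{-\x})$ and bound each of the two terms separately, pointwise over the bid profile $\mathbf{b}_{-\x}$ of the other bidders. This gives the bound for any composition of $S$, including hyperedges with other input, output or helper bidders, without relying on \Cref{clm:delta_contrib}.

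Fix $\mathbf{b}_{-\x}$ and compute the unscaled utilities. If $\x$ bids $1$, her utility $u_{\x}^e(1,\mathbf{b}_{-\x})$ is either $0$ or $(2-1)/h_e=1/h_e$. In either case it lies in $[0,1]$.

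If $\x$ bids $0$, she gets positive utility only when every bidder in $S$ bids $0$. In that case her utility is $2/h_e$, where $h_e=|S|\ge 1$. The only delicate case is $S=\{\x\}$, since then this utility equals $2$ and the crude range would be $[0,2]$. To handle this, we pair the two terms instead of bounding them separately. Whenever $\x$ gets $2/h$ by bidding $0$ (all others bid $0$), bidding $1$ makes her the sole winner and gives her utility exactly $1$. So on such profiles the difference $u_{\x}^e(1,\cdot)-u_{\x}^e(0,\cdot)$ equals $1-2/h$, which lies in $[-1,0]$ when $h\ge 1$.

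On every other profile some bidder in $S$ bids $1$, so $\x$ gets $0$ by bidding $0$. Her utility from bidding $1$ is then $1/h'\in[0,1]$, where $h'$ is the number of tied bidders at bid $1$. Thus the pointwise difference always lies in $[-1,1]$. Taking expectation over $\mathbf{b}_{-\x}\sim p_{-\x}$ and multiplying by $w_e$ gives $|\Delta_{\x}^e(p)|\le w_e$. Summing over $e\in R$ gives the claim.

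The only step requiring care is this pairing in the case where everyone else bids $0$. There the zero-bid payoff $2/h$ can exceed $1$, and it must be offset against the bid-$1$ payoff of $1$ rather than bounded in isolation. Everything else is routine. As a sanity check, the case $S=\{\x\}$ is consistent with \Cref{clm:simple1}, which gives $\Delta=-w_e$.
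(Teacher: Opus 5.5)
Your proof is correct and is essentially the paper's argument: both reduce to a single hyperedge and offset the bid-$0$ payoff $2/|S|$ (earned only when everyone else in $S$ bids $0$) against the bid-$1$ payoff of $1$ on that same event; the paper does this in expectation via $q_0\le \Pr[k=0]\le q_1$, while you do it pointwise. One minor slip: for $h\ge 1$ the quantity $1-2/h$ lies in $[-1,1)$, not $[-1,0]$ (it is positive once $h\ge 3$), but you only need the range $[-1,1]$, so your conclusion stands.
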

\begin{proof}
Fix $e=(S,w_e)\in R$ and let $\bk$ be the number of bidders in $S\setminus\{\x\}$ that bid $1$ under $p_{-\x}$. If $\x$ bids $1${,} she is always a highest bidder, tied with $\bk$ others, so $U^e_{\x}(1,p_{-\x})=w_e \cdot q_1$ with $q_1:=\E[1/(1+\bk)]$. If $\x$ bids $0${,} she is a highest bidder only when $\bk=0$, in which case all $|S|$ bidders tie, so $U^e_{\x}(0,p_{-\x})=2w_e \cdot q_0$ with $q_0:=\Pr[\bk=0]/|S|\le \Pr[\bk=0]\le q_1$. Hence $$-w_e\le -w_eq_1\le w_e\cdot (q_1-2q_0)=\Delta^e_{\x}(p)\le w_eq_1\le w_e{.}$$ 
Thus, summing over all edges in $R$ proves the {claim}.

\end{proof}
}

\subsection{COPY Gadget}\label{sec:copygadget}

Let $\Lambda>0$ be a parameter, which we will fix to be $40$ later. 

Let $\y$ be one of the output bidders (with value $4$) and $\x$ be one of the input bidders (with value $2$).
Adding a COPY gadget $R=R(=,\y,\x)$ to the HDFPA $H$ means the following:
\begin{flushleft}\begin{enumerate}
\item add one helper bidder $\h$ with value $3$;
\item add three $0$-anchors $\z{1},\z{2}$ and $\z{3}$ with value $1/2$ (note that we skip the subscript $R$ in names of these auxiliary bidders of $R$ but it should be understood that they are auxiliary bidders of $R$ and only participate in weighted hyperedges of $R$ to be added below);
\item add the following seven weighted hyperedges to $E$:
\begin{center}
\begin{tabular}{c|@{\qquad}c@{\qquad}c@{\qquad}c@{\qquad}c@{\qquad}c@{\qquad}c@{\qquad}c}
\toprule
\text{hyperedge}
& $\{\y,\h\}$
& $\{\y,\h,\z{1}\}$
& $\{\y\}$
& $\{\x,\h\}$
& $\{\x,\h,\z{2},\z{3}\}$
& $\{\h\}$
& $\{\x\}$ \\[0.8ex]
\text{weight}
& $4\Lambda$
& $12\Lambda$
& $24\Lambda$
& $4\Lambda$
& $16\Lambda$
& $36\Lambda$
& $9\Lambda$ \\
\bottomrule
\end{tabular}
\end{center}
\end{enumerate}\end{flushleft}
\vspace{.4cm}
We abuse the notation to use $R$ to denote the set of these seven weighted hyperedges.

Now we prove a lemma which shows that under a very mild condition about the final HDFPA $H$, any of its $\eta$-NE satisfies $\p{\x}\approx \p{\y}$.
To this end, we let $H$ be any HDFPA that (1) contains bidders $\x,\y,\h,\z{1},\z{2},\z{3}$ with values given earlier; (2) contains a copy of the seven weighted hyperedges in $R$ as given in the table above; (3) does not have any other weighted hyperedge that contains $\h,\z{1},\z{2}$ or $\z{3}$; and (4)  the total weight of all other hyperedges in $H$ (excluding those in $R$) that contain $\x$ is at most ${\Lambda/2}$. Then the following lemma holds:%

\begin{lemma}\label{lem:copy-identities}
Let $H$ be a HDFPA that satisfies conditions $(1)$-$(4)$ above. Then, any $\eta$-NE $p$ of $H$ satisfies 
\begin{equation*}\label{eq:copy-bound}
\big|\p{\x}-\p{\y}\big|\leq
300\eta.%
\end{equation*}
Moreover, any mixed strategy profile $p$ of $H$ satisfies $\Delta_{\y}^R(p)=0$.
\end{lemma}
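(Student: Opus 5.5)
The plan is to compute, with the closed form of \Cref{clm:delta_contrib} (and \Cref{clm:simple1} for singleton hyperedges), the advantage $\Delta^R$ of bidding $1$ over $0$ for each of the three non-anchor bidders $\y,\h,\x$ inside the gadget. The resulting expressions are linear in the other bidders' probabilities. I then combine them with \Cref{clm:binary-regret} in a short two-step contradiction argument. The $0$-anchors always bid $0$ by the no overbidding rule, so \Cref{clm:delta_contrib} applies to every hyperedge of $R$.

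\textbf{Directedness first.} Take $\y$ with $v_{\y}=4$ and the three hyperedges of $R$ that contain $\y$:
\begin{itemize}
\item $\{\y,\h\}$ (weight $4\Lambda$, $m=0$) contributes $4\Lambda\big(\tfrac12\p{\h}+1\big)=2\Lambda\p{\h}+4\Lambda$;
\item $\{\y,\h,\z{1}\}$ (weight $12\Lambda$, $m=1$) contributes $12\Lambda\big(-\tfrac16\p{\h}+\tfrac53\big)=-2\Lambda\p{\h}+20\Lambda$;
\item $\{\y\}$ contributes $-24\Lambda$.
\end{itemize}
These sum to $0$ identically, which gives $\Delta_{\y}^R(p)=0$ for every profile $p$.

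\textbf{Helper and input bidder.} For $\h$ ($v_{\h}=3$) the same computation gives contributions $2\Lambda\p{\y}+2\Lambda$, $12\Lambda$, $2\Lambda\p{\x}+2\Lambda$, $-4\Lambda\p{\x}+20\Lambda$ and $-36\Lambda$. Hence
\[
\Delta_{\h}(p)=\Delta_{\h}^R(p)=2\Lambda\big(\p{\y}-\p{\x}\big),
\]
where the first equality uses condition (3), and $W_{\h}=72\Lambda$.

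For $\x$ ($v_{\x}=2$) the three gadget hyperedges give $2\Lambda\p{\h}$, $8\Lambda$ and $-9\Lambda$, so $\Delta_{\x}^R(p)=2\Lambda\p{\h}-\Lambda$. By condition (4) and \Cref{claim:bound}, the remaining hyperedges containing $\x$ change this by at most $\Lambda/2$ in absolute value. Therefore
\[
\Delta_{\x}(p)\in\big[2\Lambda\p{\h}-\tfrac32\Lambda,\ 2\Lambda\p{\h}-\tfrac12\Lambda\big],
\]
and $W_{\x}\le 29\Lambda+\Lambda/2$.

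\textbf{Main argument.} Let $\delta:=300\eta$ and suppose $\p{\y}-\p{\x}>\delta$.
\begin{enumerate}
\item Then $\Delta_{\h}(p)>2\Lambda\delta>0$. By \Cref{clm:binary-regret}, $1-\p{\h}\le 72\Lambda\eta/(2\Lambda\delta)=36\eta/\delta=0.12$.
\item Plugging this into the lower bound gives $\Delta_{\x}(p)\ge 2\Lambda(0.88)-1.5\Lambda=0.26\Lambda>0$.
\item \Cref{clm:binary-regret} then gives $1-\p{\x}\le 29.5\Lambda\eta/(0.26\Lambda)<114\eta$.
\item But $\p{\x}<\p{\y}-\delta\le 1-\delta$ forces $1-\p{\x}>300\eta$, a contradiction.
\end{enumerate}
The case $\p{\x}-\p{\y}>\delta$ is symmetric. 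Now $\Delta_{\h}<-2\Lambda\delta$ forces $\p{\h}\le 0.12$. This gives $\Delta_{\x}\le 0.24\Lambda-0.5\Lambda=-0.26\Lambda$, so $\p{\x}<114\eta$, contradicting $\p{\x}>\delta$.

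The only real obstacle is the bookkeeping: getting the constants of \Cref{clm:delta_contrib} right for each hyperedge so that the exact cancellations in $\Delta_{\y}^R$ and $\Delta_{\h}^R$ come out. Once those identities are in hand, the chain of two applications of \Cref{clm:binary-regret} is routine, and the constant $300$ has comfortable slack.
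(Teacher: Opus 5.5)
Your proposal is correct and follows the paper's proof essentially step for step. It establishes the same three identities $\Delta_{\y}^R=0$, $\Delta_{\h}^R=2\Lambda(\p{\y}-\p{\x})$ and $\Delta_{\x}^R=2\Lambda(\p{\h}-\tfrac12)$, then uses \Cref{clm:binary-regret} twice, first on $\h$ and then on $\x$ with the $\Lambda/2$ slack from \Cref{claim:bound}. Your constants $0.12$ and $114\eta$ are slightly sharper than the paper's $1/8$ and $120\eta$. The only nit is that writing $36\eta/\delta$ divides by zero when $\eta=0$; this is harmless, since in that case \Cref{clm:binary-regret} directly gives $1-\p{\h}\le 0$.
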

\begin{proof}
Given any mixed strategy profile $p$ of an $H$ that satisfies  conditions (1)-(4) above, we have
\begin{equation*}
\Delta_{\y}^R(p)=0,
\quad
\Delta_{\x}^{R}(p)=2\Lambda\left(\p{\h}-\frac12\right)
\quad\text{and}\quad
\Delta_{\h}^{R}(p)=2\Lambda\big(\p{\y}-\p{\x}\big).
\end{equation*}

The above equalities follow using \Cref{clm:simple1} and \Cref{clm:delta_contrib}.
For $\y$, we have
\[
    \Delta_{\y}^{R}(p) = 4\Lambda\left(\frac{\p{\h}}{2} +1 \right) + 12\Lambda\left(\frac{5}{3} - \frac{\p{\h}}{6} \right) - 24\Lambda = 0.
    \]
    Similarly, for $\x$ we can compute 
    \[\Delta_{\x}^{R}(p) = 4\Lambda\left(\frac{
    \p{\h}}{2} \right) + 16\Lambda\left(\frac12 \right) - 9\Lambda = 2\Lambda\left(\p{\h}-\frac12\right).\]
    Finally, for $\h$ we can compute 
    \[\Delta_{\h}^{R}(p) = 4\Lambda\left( \frac{\p{\y}}{2} + \frac12\right) + 12\Lambda + 4\Lambda\left(\frac{\p{\x}}{2} + \frac12\right) + 16\Lambda\left(\frac54 - \frac{\p{\x}}{4} \right) -36\Lambda = 2\Lambda\big(\p{\y} - \p{\x}\big).\]

Suppose that $p$ is an $\eta$-NE of $H$ and assume for a contradiction that $\delta:=|\p{\y}-\p{\x}|>300\eta$.

Consider the first case when $\delta= \p{\x} - \p{\y} > 0$. Because no other hyperedge contains $\h$, we have (recall that we use $\Delta_{\h}$, without a superscript, to denote the overall $\Delta$ of $\h$ from the whole $H$) $$\Delta_{\h}(p) = \Delta_{\h}^{R}(p) = -2\Lambda \delta < -600 \Lambda \eta \leq  0.$$ As the total weight of $\h$ is $72 \Lambda$, we have $\p{\h}<1/8$ from \Cref{clm:binary-regret}.

Now we examine $\x$. Given that the total weight of hyperedges outside of $R$ that contain $\x$ is at most $\Lambda/2$, by \Cref{claim:bound}, the $\Delta$ of $\x$ from these hyperedges has absolute value at most $\Lambda/2$. %
Thus,
$$
\Delta_{\x}(p)\le 2\Lambda \left(\frac{1}{8}-\frac{1}{2}\right)+\frac{\Lambda}{2}\le -\frac{\Lambda}{4}.
$$
However, given that the total weight of $\x$ is at most $29\Lambda+\Lambda/2<30\Lambda$, we have from \Cref{clm:binary-regret} that $\p{\x}\le 120\eta$, contradicting the assumption that $\p{\x}=\p{\y}+\delta\ge \delta>300\eta$.

Consider the other case that $\delta= \p{\y} - \p{\x} > 0$. Then, $\Delta_{\h}(p)=2\Lambda \delta > 600\Lambda\eta \geq 0$ and so by \Cref{clm:binary-regret}, we have $(1 - \p{\h})\le 1/8$ and thus, $\p{\h}\ge 7/8$.
Using similar arguments, we have 
$$\Delta_{\x}(p)\ge \frac{3\Lambda}{4}-\frac{\Lambda}{2}=\frac{\Lambda}{4}.$$
This implies by \Cref{clm:binary-regret} that $1-\p{\x}\le 120\eta$, {contradicting} $\p{\x}=\p{\y}-\delta \leq 1-\delta<  1-300\eta$.
\end{proof}

\subsection{SUM Gadget}\label{sec:addgadget}

Let $\xx{1}$ and $\xx{2}$ be two input bidders (with value $2$) and $\y$ be an output bidder (with value $4$). \footnote{Note that we allow $\xx{1}=\xx{2}$ since the same vertex can be used as both inputs of a SUM gate.} 
Adding a SUM gadget $R=R(+,\xx{1},\xx{2},\y)$ to the HDFPA $H$ means the following:
\begin{flushleft}\begin{enumerate}
\item add one helper bidder $\h$ with value $3$;
\item add six $0$-anchors $\z{1},\ldots,\z{6}$ (so $\h,\z{1},\ldots,\z{6}$ are auxiliary bidders of $R$);
\item add the following six weighted hyperedges to $E$:
\begin{center}
\begin{tabular}{c|@{\qquad}c@{\qquad}c@{\qquad}c@{\qquad}c@{\qquad}c@{\qquad}c}
\toprule
\text{hyperedge}
& $\{\xx{1},\h,\z{1},\z{2}\}$
& $\{\xx{2},\h,\z{3},\z{4}\}$
& $\{\y,\h\}$
& $\{\y,\h,\z{5},\z{6}\}$
& $\{\h\}$
& $\{\y\}$ \\[0.8ex]
\text{weight}
& $4$
& $4$
& $8$
& $12$
& $29$
& $31$ \\
\bottomrule
\end{tabular}
\end{center}
\vspace{.4cm}

\end{enumerate}\end{flushleft}
We abuse the notation to use $R$ to denote the set of these six weighted hyperedges.

We prove a lemma which shows that, under a very mild condition about the final HDFPA $H$, any of its $\eta$-NE $p$ satisfies $\p{\y}\approx \min(\p{\xx{1}}+\p{\xx{2}},1)$.
To this end, we let $H$ be any HDFPA~that (1) contains bidders $\y,\xx{1},\xx{2},\h,\z{1},\ldots,\z{6}$ with values given earlier; (2) contains a copy of the~six weighted hyperedges in $R$ given above; (3) does not have any other hyperedges that {contain} $\h,\z{1},$ $\ldots,$ or $\z{6}$; and (4) in any mixed strategy profile $p$ of $H$, we have $\Delta_{\y}(p)=\Delta_{\y}^R(p)$ (looking back,  this comes from the zero contribution of the COPY gadget to $\Delta_{\y}(p)$,
as shown in \Cref{lem:copy-identities}).

Note that unlike the situation with the COPY gadget, we do not assume that the total weight of hyperedges that contain $\y$ (excluding those from $R$) is small compared to weights of hyperedges in $R$.
Indeed we will write $\Gamma$ to denote an upper bound on the total weight of hyperedges (including those in $R$) that contain $\y$, and 
$\Gamma$ in our application later will be a relatively large constant (compared to the total weight of $\y$ from $R$, which is $51$).
With these assumptions, we prove the following:

\begin{lemma}\label{lem:plus-identities}
Any mixed strategy profile $p$ of a HDFPA $H$ that satisfies the conditions above has 
\begin{equation}\label{eq:plus-identities}
\Delta_{\h}^{R}(p)= \p{\y}-\p{\xx{1}}-\p{\xx{2}}\quad\text{and}\quad
\Delta_{\y}^{R}(p)= 1-2\p{\h}.
\end{equation}
Moreover, if $p$ is an $\eta$-NE of $H$, then we must have
$$
\left|\p{\y}-\min\big(\p{\xx{1}}+\p{\xx{2}},1\big)\right|\le \max\big(300,2\Gamma\big)\cdot \eta.
$$
\end{lemma}
\begin{proof}
The first part of the lemma is just calculation using \Cref{clm:simple1} and \Cref{clm:delta_contrib}.

For $\h$ of value $3$, we have
\[
\Delta_{\h}^R(p)=4\left(\frac54-\frac{\p{\xx{1}}}{4}\right)
+4 \left(\frac54-\frac{\p{\xx{2}}}{4}\right)+
8\left(\frac12+\frac{\p{\y}}{2}\right)
+12\left(\frac54-\frac{\p{\y}}{4}\right)-29
\]
which is equal to $\p{\y} - \p{\xx{1}}-\p{\xx{2}}$.

For $\y$ of value $4$, we have
\[
\Delta_{\y}^R(p)=8 \left(1+\frac{\p{\h}}{2}\right)
+12 \left(2-\frac{\p{\h}}{2}\right)-31
=1 -2 \p{\h}.
\]

Suppose $p$ is an $\eta$-NE. Assume for a contradiction that $$\delta:=\left|\p{\y}-\min\big(\p{\xx{1}}+\p{\xx{2}},1\big)\right|>\max\big(300,2\Gamma\big)\cdot \eta.$$

Consider the first case that 
  $\p{\y}=\min(\p{\xx{1}}+\p{\xx{2}},1)+\delta$.
Given that $\p{\y}\in [0,1]${,} it~must~be that $\p{\y}=\p{\xx{1}}+\p{\xx{2}}+\delta$ and thus,
  $\Delta_{\h}(p)=\Delta_{\h}^R(p)\ge 300\eta$.
Given that the total weight of $\h$ is $57$, using \Cref{clm:binary-regret}, we have $1-\p{\h}\le 1/5$ and $\p{\h}\ge 4/5$.
This implies  $\Delta_{\y}(p)=\Delta_{\y}^R(p)\le -3/5$.
Given that the total weight of $\y$ is at most $\Gamma$, we have from \Cref{clm:binary-regret} that $\p{\y}\le 5\Gamma\eta/3 \leq 2\Gamma\eta< \delta$, a contradiction with $\p{\y}\ge \delta$.

Consider the other case of 
  $\p{\y}=\min(\p{\xx{1}}+\p{\xx{2}},1)-\delta$.
Using $\p{\y}\le \p{\xx{1}}+\p{\xx{2}}-\delta$,~we have $\Delta_{\h}(p)\le -300\eta$ and thus, $\p{\h}\le 1/5$.
This implies $\Delta_{\y}(p)\ge 3/5$, and  $1-\p{\y}\leq 2\Gamma \cdot \eta< \delta$, %
which {contradicts} the assumption that $\p{\y}\le 1-\delta$.
\end{proof}

\subsection{NOT Gadget}\label{sec:compgadget}

Finally let $\x$ be an input bidder  (with value $2$) and let $\y$ be an output bidder (with value $4$). 
Adding a NOT gadget $R=R(\neg,\x ,\y)$ to the HDFPA $H$ means the following:
\begin{flushleft}\begin{enumerate}
\item add one helper bidder $\h$ with value $3$;
\item add two $0$-anchors $\z{1}$ and $\z{2}$ (so $\h,\z{1}$ and $\z{2}$ are auxiliary bidders of $R$);
\item add the following five weighted hyperedges to $E$:
\begin{center}
\begin{tabular}{c|@{\qquad}c@{\qquad}c@{\qquad}c@{\qquad}c@{\qquad}c}
\toprule
\text{hyperedge}
& $\{\x,\h\}$
& $\{\y,\h\}$
& $\{\y,\h,\z{1},\z{2}\}$
& $\{\h\}$
& $\{\y\}$ \\[0.8ex]
\text{weight}
& $2$
& $8$
& $12$
& $21$
& $31$ \\
\bottomrule
\end{tabular}
\end{center}
\end{enumerate}\end{flushleft}
\vspace{.4cm}
We abuse the notation and use $R$ to denote the set of these five weighted hyperedges.

We prove a lemma which shows that, under a very mild condition about the final HDFPA $H$, any of its $\eta$-NE $p$ satisfies $\p{\y}\approx 1-\p{\x}$.
To this end, we let $H$ be any HDFPA~that (1) contains bidders $\x,\y ,\h,\z{1},\z{2}$ with values given earlier; (2) contains a copy of the~five weighted hyperedges in $R$ given above; (3) does not have any other hyperedges that {contain} $\h,\z{1}$ or $\z{2}$; and (4) {is such that,} in any mixed strategy profile $p$ of $H$, we have $\Delta_{\y}(p)=\Delta_{\y}^R(p)$.
Similarly, we write $\Gamma$ to denote an upper bound on the total weight of hyperedges (including those in $R$) that contain $\y$.

We prove the following lemma:

\begin{lemma}\label{lem:complement-identities}
Any mixed strategy profile $p$ of an $H$ that satisfies the conditions above has 
\begin{equation}\label{eq:complement-identities}
\Delta_{\h}^R(p)= \p{\y}-\big(1-\p{\x}\big)
\quad\text{and}\quad
\Delta_{\y}^R(p)= 1-2\p{\h}.
\end{equation}
Moreover, if $p$ is an $\eta$-NE of $H$, then we must have
$$
\left|\p{\y}-\big(1-\p{\x}\big)\right|\le \max(300,2\Gamma)\cdot \eta.
$$
\end{lemma}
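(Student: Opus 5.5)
The plan mirrors the proof of \Cref{lem:plus-identities}. First I verify the two identities in \eqref{eq:complement-identities} by summing edge contributions with \Cref{clm:simple1} and \Cref{clm:delta_contrib}. For the helper $\h$ (value $3$), the edge $\{\x,\h\}$ (weight $2$, $m=0$) contributes $2\left(\frac12+\frac{\p{\x}}{2}\right)$. The edge $\{\y,\h\}$ (weight $8$) contributes $8\left(\frac12+\frac{\p{\y}}{2}\right)$. The edge $\{\y,\h,\z{1},\z{2}\}$ (weight $12$, $m=2$) contributes $12\left(\frac54-\frac{\p{\y}}{4}\right)$. The singleton $\{\h\}$ contributes $-21$. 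The sum is $1+\p{\x}+4+4\p{\y}+15-3\p{\y}-21=\p{\y}-(1-\p{\x})$. For $\y$ (value $4$), the same computation as in the SUM gadget applies: $8\left(1+\frac{\p{\h}}{2}\right)+12\left(2-\frac{\p{\h}}{2}\right)-31=1-2\p{\h}$. These identities hold for every profile, which gives the first part.

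For the approximation bound, let $p$ be an $\eta$-NE. Suppose for contradiction that $\delta:=|\p{\y}-(1-\p{\x})|>\max(300,2\Gamma)\eta$. By condition (3), $\Delta_{\h}(p)=\Delta_{\h}^R(p)$. The total weight of $\h$ is $W_{\h}=2+8+12+21=43$.

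\emph{Case 1: $\p{\y}=1-\p{\x}+\delta$.} Then $\Delta_{\h}(p)=\delta>300\eta$, so \Cref{clm:binary-regret} gives $1-\p{\h}\le 43\eta/\delta<43/300\le 1/5$, hence $\p{\h}\ge 4/5$. By condition (4), $\Delta_{\y}(p)=\Delta_{\y}^R(p)=1-2\p{\h}\le -3/5$. Applying \Cref{clm:binary-regret} with $W_{\y}\le\Gamma$ gives $\p{\y}\le 5\Gamma\eta/3<\delta$. This contradicts $\p{\y}=1-\p{\x}+\delta\ge\delta$.

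\emph{Case 2: $\p{\y}=1-\p{\x}-\delta$.} Symmetrically, $\Delta_{\h}(p)<-300\eta$ gives $\p{\h}\le 1/5$. Then $\Delta_{\y}(p)\ge 3/5$, which gives $1-\p{\y}\le 5\Gamma\eta/3<\delta$. This contradicts $\p{\y}\le 1-\delta$.

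No step is a real obstacle. The only care needed is checking the arithmetic of the weights so that $\Delta_{\h}^R$ has coefficient exactly $1$ on $\p{\y}$ and $\p{\x}$ with constant $-1$, and that $43/300\le 1/5$, which makes the threshold $300$ suffice.
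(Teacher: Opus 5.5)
Your proposal is correct and follows the paper's own approach. The paper computes the same two identities and then says only that the rest is similar to the proof of \Cref{lem:plus-identities}; you carry out that two-case contradiction argument explicitly, with the helper's total weight $43$ in place of $57$, and your arithmetic checks out.
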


\begin{proof}
For $\h$ of value $3$, we have 
\[
\Delta_{\h}^R(p)=
2 \left(\frac12+\frac{\p{\x}}{2}\right)+8\left(\frac12+\frac{\p{\y}}{2}\right) + 12\left(\frac{5}{4}-\frac{\p{\y}}{4}\right)-21
=\p{\y}+\p{\x}-1.
\]
For $\y$ of value $4$, we have 
$$
\Delta_{\y}^R(p)=8\left(1+\frac{\p{\h}}{2}\right)+12\left(2-\frac{\p{\h}}{2}\right)-31=1-2p(\h).
$$
The rest of the proof is similar to 
  that of \Cref{lem:plus-identities}.
\end{proof}

\subsection{Putting Pieces Together}\label{sec:piecestogether}

We are now ready to prove the \PPAD-hardness part of \Cref{thm:HDFPA_ppad_hard}:

\begin{proof}[Proof of \PPAD-hardness for \Cref{thm:HDFPA_ppad_hard}]
Given {a generalized} circuit $C=(V,\calT)$ with $n$ nodes and where every node is the input of at most $2$ gates, we construct a HDFPA $H$ as described earlier in \Cref{sec:overviewreduction}, with $\Lambda=40$. 
Set $\Gamma=2000$.
The following constraints hold in every $\eta$-NE $p$ of $H$:
\begin{enumerate}
\item Trivially we have $\p{\yy{u}}\in [0,1]$ for every node $u\in V$;
\item For every node $u\in V$, we have 
  $\p{\xx{u}}=\p{\yy{u}} \pm 300\eta$;
\item For every SUM gate $(G_+,u,v,w)\in \calT$, we have 
$\p{\yy{w}}=\min\big(\p{\xx{u}}+\p{\xx{v}},1\big)\pm 2\Gamma\eta$;
\item For every NOT gate $(G_\neg,u,w)\in \calT$, we have 
$\p{\yy{w}}=1-\p{\xx{u}}\pm 2\Gamma\eta.
$
\end{enumerate}
It is easy to show that $(\p{\yy{u}}:u\in V)$ satisfies
\begin{enumerate}
\item For every SUM gate $(G_+,u,v,w)\in \calT$, we have 
\begin{align*}
\p{\yy{w}}&=\min\big(\p{\xx{u}}+\p{\xx{v}},1\big)\pm 2\Gamma\cdot \eta\\ &=\min\big(\p{\yy{u}}+p(\yy{v})\pm 600\eta ,1\big)\pm 2\Gamma\cdot \eta\\
&= \min\big(\p{\yy{u}}+p(\yy{v}) ,1\big)\pm 3\Gamma\cdot \eta;
\end{align*}
\item For every NOT gate $(G_\neg,u,w)\in \calT$, we have
\begin{align*}
\p{\yy{w}}=1-\p{\xx{u}}\pm 2\Gamma\eta
=1-\big(\p{\yy{u}}\pm 300\eta\big)\pm 2\Gamma\cdot \eta=1-\p{\yy{u}}\pm 3\Gamma\cdot \eta.
\end{align*}
\end{enumerate}
So $(\p{\yy{u}}:u\in V)$ is {a} $(3\Gamma \cdot \eta)$-{approximate} solution to $C$ which can be made {an} $\eps$-{approximate} solution to $C$ by choosing $\eta$ to be a small constant. 
Finally, the theorem  follows {by normalizing bids and values as pointed out in \Cref{sec:defHDFPA}}. In this case, we would divide all bids and values by $4$ and get \PPAD-hardness of finding an $\eta/4$-NE of a HDFPA for some constant $\eta$ where every hyperedge has at most $4$ participants, $B=\{0,1/4\}$ and every bidder has a value in $\{1/8, 1/2, 3/4,1\}.$
\end{proof}

\subsection{Membership in $\PPAD$}
To prove membership in $\PPAD$ we will need the following lemma:
\begin{lemma}\label{lem:simple1}
Given a HDFPA $H$ and a mixed strategy profile $\sigma$,
the expected utility $U_i(\sigma)$ of any bidder $i$ can be computed
in time polynomial in the length of
$H$ and $\sigma$.
\end{lemma}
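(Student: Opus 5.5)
By linearity of expectation over hyperedges, it suffices to show that each $U_i^e(\sigma):=\E_{\mathbf{b}\sim\sigma}[w_e\, u_i^e(\mathbf{b})]$ can be computed in polynomial time. Then $U_i(\sigma)=\sum_{e\in E} U_i^e(\sigma)$, and since $E$ is listed explicitly, the outer sum has polynomially many terms. Only hyperedges $e=(S,w_e)$ with $i\in S$ contribute, and $u_i^e(\mathbf{b})$ depends only on the bids of bidders in $S$. The bids are independent across bidders because $\sigma$ is a product of mixed strategies. The difficulty is that $|S|$ need not be constant in a general HDFPA, so enumerating $B^{|S|}$ is not allowed; we need a polynomial-time computation instead.

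Fix $e=(S,w_e)$ with $i\in S$, and condition on bidder $i$'s bid $b\in B_i$, which occurs with probability $\sigma_i(b)$. Bidder $i$ gets positive share only if every $j\in S\setminus\{i\}$ bids at most $b$. In that event her utility is $(v_i-b)/(1+\mathbf{k})$, where $\mathbf{k}$ is the number of $j\in S\setminus\{i\}$ bidding exactly $b$. So we need the distribution of $\mathbf{k}$ jointly with the event that nobody bids above $b$. For each $j\ne i$ in $S$, set
\[
q_j=\sigma_j(b)\quad\text{and}\quad r_j=\Pr_{\sigma_j}[\text{bid}<b].
\]
These are the probabilities that $j$ ties with $i$ or bids strictly below her. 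Define $P_t(k)$, for the first $t$ bidders of $S\setminus\{i\}$ in some fixed order, as the probability that all of them bid at most $b$ and exactly $k$ of them bid $b$. It satisfies the standard dynamic program
\[
P_t(k)=P_{t-1}(k)\,r_t+P_{t-1}(k-1)\,q_t,
\]
which is a polynomial multiplication of the factors $(r_j+q_j z)$. This takes $O(|S|^2)$ arithmetic operations. Then
\[
U_i^e(\sigma)=w_e\sum_{b\in B_i}\sigma_i(b)\,(v_i-b)\sum_{k}\frac{P_{|S|-1}(k)}{k+1}.
\]

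It remains to check bit complexity, which is the only step needing care. All inputs are rationals encoded in binary. Each $P_t(k)$ is a sum of products of at most $|S|$ input probabilities, so its numerator and denominator have bit-length polynomial in the input size (for instance, use a common denominator equal to the product of the denominators of the $\sigma_j$'s involved). Hence all intermediate quantities have polynomially bounded size, and the total running time is polynomial in the lengths of $H$ and $\sigma$. Note also that for the hard instances built above, $|S|\le 4$ and direct enumeration already suffices; the dynamic program is only needed for the general statement.
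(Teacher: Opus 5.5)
Your proposal is correct and follows essentially the same route as the paper. Both decompose $U_i(\sigma)$ over hyperedges and over bidder $i$'s bid $b$, then run the dynamic program $D_{r,k}=\ell_r D_{r-1,k}+t_r D_{r-1,k-1}$, which tracks how many others tie at $b$ while nobody bids higher, and sum $(v_i-b)\,D_{m,k}/(k+1)$; your explicit bit-complexity check is a small addition that the paper leaves implicit.
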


\begin{proof}
Fix a weighted hyperedge $e=(S,w_e)$ such that $i \in S$, and say bidder $i$ submits a bid
$b\in B$ and has value $v_i$. We denote the bidders in $S \setminus \{i\}$ as  $\{j_1,\ldots,j_m\}$. For each $r \leq m$ we define:
$$
\ell_r:=\Pr_{\mathbf{a}\sim\sigma_{j_r}}[\mathbf{a}<b] \quad \text{ and } \quad 
t_r:=\Pr_{\mathbf{a}\sim\sigma_{j_r}}[\mathbf{a}=b].
$$
Clearly $\ell_r$ and $t_r$ can be computed in polynomial time in the length of $\sigma$'s representation.

For $0\leq k\leq r\leq m$, let $D_{r,k}$ be the probability that
exactly $k$ of the bidders $j_1,\ldots,j_r$ bid $b$ and all remaining
bidders among them bid strictly below $b$. We can compute these probabilities using dynamic programming in polynomial time. We define 
$$D_{0,0}=1 \quad \text{ and } \quad D_{r,k}=0 \text{ whenever  $k<0$ or $k>r$}.$$ Since bidders choose their bids independently, we have
$$D_{r,k} = \ell_r \cdot  D_{r-1,k} + t_r \cdot D_{r-1,k-1}.$$
If any of $j_1, \ldots, j_m$ bids above $b$, bidder $i$ loses. Otherwise, if $k$ bidders among $j_1, \ldots, j_m$ bid $b$ and all other bidders bid below
$b$, then bidder $i$ wins with probability $1/(k+1)$. Hence,
\[
\mathbb{E}_{\mathbf b_{-i}\sim\sigma_{-i}}
\left[u_i^e(b,\mathbf b_{-i})\right]
=
(v_i-b)\sum_{k=0}^{m}\frac{D_{m,k}}{k+1}.
\]
The dynamic program has $O(m^2)$ entries, and each entry can be computed in polynomial time.~So we can compute the above expression in polynomial time. 

We have that bidder $i${'s} expected utility is:
\[
U_i(\sigma)
=
\sum_{\substack{e=(S,w_e) \in E\\ i\in S}} w_e
\sum_{b\in B }
\sigma_i(b)\,
\mathbb{E}_{\mathbf b_{-i}\sim\sigma_{-i}}
\left[u_i^e(b,\mathbf b_{-i})\right],
\]
which shows that $U_i(\sigma)$ can be computed in polynomial time.
\end{proof}

We now turn to the proof of $\PPAD$ membership. In particular, we prove the following:
\begin{lemma}\label{lem:HDFPA_in_PPAD}
    The problem of finding an $\eta$-NE of an HDFPA $H$, when $\eta>0$ is a rational number given in binary in the input,
is in \PPAD.
\end{lemma}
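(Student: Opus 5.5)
To place the problem in \PPAD{} we reduce it to finding an approximate Nash equilibrium of an explicitly given normal-form game with a polynomial number of players and actions. For such games, computing an $\epsilon$-approximate NE with $\epsilon$ given in binary is in \PPAD; this follows from the standard Brouwer/Sperner formulation of Nash's map, and in particular from membership of polymatrix and graphical games, or more generally of succinct games with a polynomial-time expected-utility oracle.

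We will use the general criterion of Daskalakis--Fabrikant--Papadimitriou and Papadimitriou. Finding an $\epsilon$-NE is in \PPAD{} for any succinctly represented game that satisfies two conditions: the number of players and the number of pure strategies per player are polynomial, and there is a polynomial-time algorithm computing each player's expected utility under a given mixed profile. The reason is that the Nash/Brouwer function $f(\sigma)$ is then evaluable in polynomial time and is Lipschitz with a constant bounded polynomially, or at most exponentially, in the input size. An approximate fixed point of $f$, which is a problem in \PPAD{} via the standard discretization and Sperner argument, then yields an approximate NE.

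The steps are as follows.
\begin{enumerate}
\item \textbf{Game size.} $H$ has $n$ players, each with at most $|B|$ actions ($B_i\subseteq B$), so both quantities are polynomial.
\item \textbf{Utility oracle.} \Cref{lem:simple1} computes $U_i(\sigma)$, and also $U_i(b_i,\sigma_{-i})$ by substituting the pure strategy $b_i$, in polynomial time.
\item \textbf{Lipschitz bound.} $U_i$ is multilinear in $\sigma$ with coefficients bounded by $W_i\cdot \max_j v_j\le W_i$. So $U_i$ is Lipschitz in $\sigma$ (in $\ell_1$) with constant at most $n W_i$.
\item \textbf{Rescaling the error.} Our definition uses the error $\eta W_i$ rather than a uniform $\epsilon$. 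We handle this by rescaling each player's utility by $1/W_i$. This is legitimate because a player's best-response structure is unchanged by positive scaling of her own payoffs. The normalized game has payoffs in $[0,1]$, and an $\eta$-NE of it in the standard sense is exactly an $\eta$-NE of $H$ in the sense of \Cref{def:normalized-ne}; \Cref{obs:alternative} says precisely this. The $W_i$ are rationals computable in polynomial time, so the rescaled utility oracle remains polynomial-time.
\item \textbf{Fixed point.} Apply the generic membership theorem with $\epsilon=\eta$. If a self-contained argument is preferred, we instead define the Nash map
\[
f_i(\sigma)(b)=\frac{\sigma_i(b)+\max(0,U_i(b,\sigma_{-i})-U_i(\sigma))}{1+\sum_{b'}\max(0,U_i(b',\sigma_{-i})-U_i(\sigma))}.
\]
This map is polynomial-time computable and has a polynomially bounded Lipschitz constant, so finding a $\delta$-approximate fixed point lies in \PPAD. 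A $\delta$-approximate fixed point with $\delta=\mathrm{poly}(\eta/n|B|)$ is an $\eta$-NE by the standard argument.
\end{enumerate}

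The main obstacle is the last step: quantitatively converting an approximate fixed point into an $\eta$-NE while keeping the precision polynomial in the bit length of $\eta$. To avoid redoing this analysis, we would simply invoke the known theorem that $\epsilon$-NE of normal-form games with polynomial-time expected-utility computation lie in \PPAD, and verify its hypotheses via the rescaled utilities $U_i/W_i$ and \Cref{lem:simple1}.
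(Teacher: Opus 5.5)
Your proposal is correct and follows essentially the same route as the paper. Both rescale each bidder's utility by $1/W_i$, so that payoffs lie in $[0,1]$ and an additive $\eta$-NE of the normalized game is exactly an $\eta$-NE of $H$. Both then use Nash's map, which is polynomially computable by \Cref{lem:simple1} and polynomially continuous, and invoke Etessami--Yannakakis (their Propositions 2.2 and 2.3, with $\eta'=\eta^2/(16|B|^3)$) to obtain \PPAD{} membership. One small fix: the map in your Step 5 should use the normalized utilities $U_i/W_i$, as your Step 4 intends, because the conversion from approximate fixed point to approximate equilibrium assumes payoffs in $[0,1]$.
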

\begin{proof}

 Let $H=(\calN,B,v,E)$. First we normalize utilities of bidder $i$ by defining (recall $W_i>0$)
\[
\overline{ u}_{i}(b)=\frac{u_i(b)}{W_i}.
\]
Given a mixed strategy profile $\sigma$, we write 
$$
\overline{U}_i(\sigma)
=\frac{U_i(\sigma)}{W_i}
$$
to denote the normalized expected payoff of bidder $i$.
Then we have (1) the $n$-bidder normal-form game,
denoted by $G$ below, defined using $\overline{u}_i(\cdot)$, $i\in [n]$, has $0\le \overline{u}_i(b)\le 1$
  for every $b$ and $i$; and 
(2) any (additive) $\eta$-approximate Nash equilibrium of $G$, i.e., a mixed strategy profile $\sigma$ satisfying
$$
\overline{U}_i(\sigma)\ge \overline{U}_i(b,\sigma_{-i})-\eta,\quad\text{for any $i$ and $b\in B_i$,}
$$
is equivalently an $\eta$-NE of $H$. So it suffices to show that the problem of finding an $\eta$-approximate Nash equilibrium of $G$, where
  $\eta$ is part of the input, is in \PPAD.

For this purpose, we combine Nash's Brouwer map in \cite{Nash} and a general result from \cite{EtessamiYannakakis} (Proposition 2.2).
Recall that given any normal-form game, such as the $n$-bidder game $G$ we are interested in here, Nash used the following map $f$ to prove the existence of a Nash equilibrium: $f$ maps any strategy profile $\sigma$ to a new strategy profile $\sigma'=f(\sigma)$, where
\begin{equation}\label{eq:nash-map}
\sigma_{i,b}'
=\frac{\sigma_{i,b}+\phi_{i,b}(\sigma)}{1+\sum_{a\in B_i} \phi_{i,a}(\sigma)},\quad\text{where
$\phi_{i,a}(\sigma)=\max\big\{0,\overline{U}_i (a,\sigma_{-i})-\overline{U}_i (\sigma)\big\}$.}
\end{equation}
Nash showed that every fixed point $\sigma$ of $f$ is an exact Nash equilibrium; Etessami and Yannakakis showed that any
  $\eta'$-weakly approximate fixed point $\sigma$ of $f$ (i.e., $\|f(\sigma)-\sigma\|_\infty\le \eta'$) with
$$
\eta'=\frac{\eta^2}{16 |B|^3}
$$
is an $\eta$-approximate NE of $G$ (see Proposition 2.3, by setting $M$ there to be $1$ given that payoffs of $G$ are normalized to $[0,1]$).

We would like to apply Proposition 2.2 of \cite{EtessamiYannakakis} to conclude that the problem of finding~an $\eta'$-weakly approximate fixed point of $f$ is in \PPAD. It suffices to show that $f$ is polynomially~computable (see \Cref{lem:simple1}) and polynomially continuous.
For the latter, 
we note the following arguments that work for general games with payoffs normalized in $[0,1]$:
Given any two mixed strategy profiles $\sigma$ and $\tau$ with $\|\sigma-\tau\|_\infty\le\xi$,  it is easy to show using total variation distance that 
\[
\big|\overline{U}_i(\sigma)-\overline{U}_i(\tau)\big|\le \frac{n|B|\xi}{2}\quad\text{and}\quad
\big|\overline{ U}_i(a,\sigma_{-i})-\overline{U}_i(a,\tau_{-i})\big|
\le \frac{n|B|\xi}{2},
\]
from which we have $|\phi_{i,a}(\sigma)-\phi_{i,a}(\tau)|\le n|B|\xi.$
It then follows from standard arguments that 
$$\big|\sigma_{i,b}'-\tau_{i,b}'\big| \leq 4 n|B|^2\xi,$$
from which we have that $f$ is polynomially continuous in the sense of \cite{EtessamiYannakakis}.
\end{proof}

\section{Proof of \Cref{thm:DFPA_ppad_hard}}\label{sec:maintheorem}

\subsection{$\PPAD$-hardness}\label{sec:finalconst}

{ We first recall the proof of \Cref{thm:HDFPA_ppad_hard}. We showed that there is a constant $\eta$ such that computing an $\eta$-NE of an HDFPA is \PPAD-hard. The hard instances satisfy the following conditions:
\begin{flushleft}\begin{itemize}
    \item[(a)] Every {hyperedge} has at most $4$ bidders; and  
    \item [(b)] Each bidder $i$ has value  $v_i \in \{1/2, 2,3,4\}$ and the bidding set is $B=\{0,1\}$.     
\end{itemize}\end{flushleft}
}

Consider such {a} hard HDFPA instance $H=(\calN, B,v,E)$. We build the following DFPA $A_H=(\calN', B',\calV, \calD)$. The new bidding set is $B' = \{0,1,2\}$, and
we add the following bidders to $A_H$:
\begin{flushleft}\begin{itemize}
    \item For each bidder $i$ in $H$ {we} create a bidder $x_i$ in $A_H$. 
    (As it becomes clear in the construction of the value distribution $\calD$ below, 
    $x_i$'s value drawn from $\calD$ will be either $0$ or $v'_i \coloneqq v_i + 1$.)
    \item Add 50 ``dummy bidders''. (They will always have  value $0$ from $\calD$.)
\end{itemize}\end{flushleft}
We specify the value distribution $\calD$ by the following points with \emph{unnormalized} mass:
\begin{flushleft}\begin{itemize}
    \item For each hyperedge $e=(S,w_e) \in E$, add the following point to the distribution $\calD$ with mass $w_e$: for each $i \in S$, bidder $x_i$ receives value $v'_i$; all other bidders receive value $0$.\footnote{We assume that there are no hyperedges $e=(S,w_e)$ and $e'=(S',w_{e'})$ with $S=S'$, as these can be merged into a new hyperedge $e^{*}=(S,w_e+w_{e'})$.}
    \item For each bidder $i$ in $H$ with original value $v_i = 1/2$ (recall that these were called {$0$-anchors}), add a point with mass $W_i$:  $x_i$ receives value $3/2$ and all other bidders receive value $0$. 
\end{itemize}\end{flushleft}
We normalize the above masses to obtain the final probability distribution $\calD$. Importantly, in any $v \in \supp(\calD)$ there are at most $4$ bidders who have non-zero value. Given the description of {a} hard HDFPA instance $H$, we can in polynomial time build the DFPA $A_H$.

We will now show that given any $\eps$-approximate MBNE of $A_H$, we can obtain an $O(\eps)$-NE of $H$ in polynomial time.  
First, the following lemma shows that at any $\eps$-approximate MBNE  of $A_H$, a bidder $x_i$ only bids $0$ with small probability when they have the non-zero value $v_i'$. 
\begin{lemma}\label{lem:small_0}
    Let $\eps \geq 0$ and $\beta$ be an $\eps$-approximate MBNE of $A_H$. For any {bidder} $i$ of $H$, we have
    \[\beta_{x_i}(v^\prime_i)(0) \leq 10\eps.\]
\end{lemma}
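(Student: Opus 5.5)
The plan is a direct deviation argument: compare bidding $0$ with bidding $1$ for bidder $x_i$ at value $v_i'$. Since $\beta$ is an $\eps$-approximate MBNE, the mixed strategy $\beta_{x_i}(v_i')$ earns at least $U_{x_i}(1,\beta_{-x_i};v_i')-\eps$. Write $q=\beta_{x_i}(v_i')(0)$. Then
\[
U_{x_i}(\beta_{x_i},\beta_{-x_i};v_i') \le q\, U_{x_i}(0,\cdot;v_i') + (1-q)\max_{b\in\{1,2\}}U_{x_i}(b,\cdot;v_i').
\]
If I can show that $U_{x_i}(0,\cdot;v_i')$ is at least a constant $c$ below $\max_{b\in\{1,2\}}U_{x_i}(b,\cdot;v_i')$, then
\[
q\,c \le \max_b U - U(\beta)\le \eps + \bigl(\max_{b\in\{1,2\}}U - U(1)\bigr).
\]
So it is cleaner to compare directly with the best single deviation $b^*\in\{1,2\}$. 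The regret of $\beta$ against $b^*$ is at least $q\,(U(b^*)-U(0))$, and this is at most $\eps$. It therefore suffices to show that $U(1)-U(0)\ge 1/10$, using $b=1$.

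The key step is bounding $U(0)$ from above and $U(1)$ from below, conditioned on $x_i$ having value $v_i'$. The conditional distribution $\calD_{x_i\mid v_i'}$ is a mixture over the support points in which $x_i$ has value $v_i'$. These are the hyperedge points $e\ni i$; for a $0$-anchor, there is also the point where only $x_i$ is positive, but that point carries value $3/2$, not $v_i'$. In every such point at most four bidders are nonzero, so the remaining bidders, including the $50$ dummies, have value $0$ and must bid $0$ by no overbidding. Thus at least $n'-4\ge 50$ bidders bid $0$. If $x_i$ bids $0$, she wins only when everyone bids $0$, and then she shares with at least $51$ bidders, giving
\[
U(0)\le v_i'/51\le 5/51.
\]
If $x_i$ bids $1$, note that $v_i'\ge 3/2$, so bidding $1$ is allowed. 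She loses only when someone bids $2$, and she ties with at most three others at bid $1$. Any other positive-value bidder $x_j$ in the point has value $v_j'$ in that very point, so her bid distribution is $\beta_{x_j}(v_j')$.

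The main obstacle is that other bidders might bid $2$ often, which would drive $U(1)$ toward $0$. I see two ways to handle it. The first is to use bid $2$ as the deviation when $v_i'\ge 3$ (values $3,4,5$): bidding $2$ guarantees utility at least $(v_i'-2)/4\ge 1/4$ after ties with at most three others, which exceeds $5/51$ by a constant. The second case is the $0$-anchors, with $v_i'=3/2$, which can only bid $0$ or $1$. Here I would argue that their hyperedge partners never exceed bid $1$ in a damaging way, by inspecting the hard instance: bidding $2$ nets utility $v-2$, while bidding $1$ gives at most $v-1$, and for bidders of value $v_j'\in\{3,4,5\}$ the gap matters. This case seems genuinely delicate. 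The fallback is a uniform bound: bidding $1$ yields at least the value $U_{x_i}(1)\ge (v_i'-1)\Pr[\text{no one bids }2]/4$. If that probability is small, I would try to derive a contradiction from the partners' incentives, since a partner bidding $2$ against zero-anchors wastes utility relative to bidding $1$. Once the constant gap ($\ge 1/10$) is established in all cases, $q\le 10\eps$ follows.
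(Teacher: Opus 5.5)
\paragraph{Gap: the $0$-anchor case.}
For bidders with $v_i'\ge 3$ your argument is the paper's. Deviating from $0$ to $2$ earns at least $(v_i'-2)/4\ge 1/4$, because at most three other bidders have nonzero value. Bidding $0$ earns at most $v_i'/51\le 5/51$, because of the $50$ dummies.

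The $0$-anchor case is where the proof breaks, and the cause is a misreading of the construction. For a bidder $i$ of $H$ with $v_i=1/2$ we have $v_i'=v_i+1=3/2$. So the extra point of mass $W_i$, in which $x_i$ alone has value $3/2$, \emph{does} lie in the support of $\calD_{x_i\mid v_i'}$. The hyperedge points containing $i$ also have total mass $W_i$, so this solo point carries exactly half of the conditional mass. On it every other bidder has value $0$ and must bid $0$, so bidding $1$ wins outright with utility $3/2-1=1/2$. Hence $U(1)\ge \frac12\cdot\frac12=\frac14$, while $U(0)\le 3/102$. The gap exceeds $1/10$, and $q\le 10\eps$ follows at once. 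Providing exactly this guarantee is why those points are added to $\calD$.

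\paragraph{Why your fallback fails.}
You propose to show, from the partners' incentives, that the partners of a $0$-anchor rarely bid $2$. This cannot work. In $A_H$, a partner bidding $2$ is precisely the image of bidding $1$ in $H$, so it is in no sense wasteful. For example, nothing prevents $x_{\y}$, the copy of an output bidder, from bidding $2$ at value $5$ with probability close to $1$. That merely encodes $p(\y)\approx 1$, which the circuit may force.

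The only hyperedge containing the anchor $\z{1}$ of that COPY gadget is $\{\y,\h,\z{1}\}$. In such a profile, on every hyperedge point of $\z{1}$ someone bids $2$. The hyperedge points then contribute essentially nothing to either $U(0)$ or $U(1)$, so no constant gap between bidding $0$ and bidding $1$ can come from them. The missing idea is to use the solo point, not a delicate argument about the partners.
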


We delay its proof to the end of the section and {prove} the \PPAD-hardness part of \Cref{thm:DFPA_ppad_hard}.

    Let $\eps'=\eta/121$ and $\beta$ be any $\eps'$-approximate MBNE of $A_H$. We will show that we can obtain from $\beta$ in polynomial {time}   an $\eta$-NE of $H$. Consider the profile $\sigma$ for $H$, where each bidder $i$ gets  $$\sigma_i(1) = \beta_{x_i}(v^\prime_i)(2) \quad \text{ and } \quad \sigma_i(0)=1-\sigma_i(1).$$
We will also consider the profile $\hat{\beta}$ for $A_H$ which corresponds to $\sigma$. In particular we set 
    $$\hat{\beta}_{x_i}(v_i')(2)={\beta}_{x_i}(v_i')(2)
\quad\text{ and }\quad\hat{\beta}_{x_i}(v_i')(1)=\beta_{x_i}(v_i')(0)+\beta_{x_i}(v_i')(1) \quad\text{ and }\quad \hat{\beta}_{x_i}(v_i')(0)=0{.}$$
Note that any bidder $i$  of $H$ with $v_i=1/2$ will have $\sigma_i(0)=1$,
satisfying the no-overbidding rule.
This is because $\beta$ must satisfy the no-overbidding rule and thus, $\beta_{x_i}(v_i')(2)=0$. 
For the dummy bidders, both $\beta$ and $\smash{\hat{\beta}}$ have them always bid $0$ given the no-overbidding rule.

We verify that $\sigma$ is indeed an $\eta$-NE of $H$ in the rest of the proof. 
This is trivial for bidders $i$ of $H$ with $v_i=1/2$ since bidding $0$ is their only choice given the no-overbidding rule. 

Next consider a bidder $i$ of $H$ with value $v_i\ge 2$. 
Fix {any} bid $b^* \in \{0,1\}$. We consider an edge $e=(S,w_e)$ of $H$ with $i \in S$. Consider a bid profile $b$ in $A_H$ where $x_i$ submits $b^*+1$, every {bidder} $x_j$ with $j \in S$ {submits} a bid $b_j \in \{1,2\}$ and remaining bidders submit $0$. Let $b'$ be a profile for $H$ where bidder $\ell \in S$ bids $b_\ell-1$ and bidders outside $S$ bid $0$. Then, we have $$u_{x_i}(b;v'_i)=u_{i}^e(b')$$
where $u_{x_i}$ refers to the utility of bidder $x_i$ in $A_H$, while $u^e_{i}$ refers to the utility of bidder $i$ coming from the hyperedge $e$.  As such we have:
\begin{align*}
    U_i(b^*,\sigma_{-i}) &= \sum_{\substack{(S,w_e) \in E\\ i \in S} } w_e \E_{{\mathbf{b}}_{-i} \sim \sigma_{-i} } \left[ u^{e}_i\left(b^*, {\mathbf{b}}_{-i}\right)\right]\\ &= W_i \E_{{{\mathbf{v}}_{-x_i}} \sim \calD_{x_i|v^\prime_i}} \E_{\mathbf{b}_{-x_i} \sim {\hat{\beta}_{-{x_i}}}({{\mathbf{v}}_{-x_i}})}\left[u_{x_i}\left(b^* + 1,\mathbf{b}_{-x_i};v_i'\right)\right]{.} 
\end{align*}
Note that the expected utility for $x_i$ can only increase under $\beta_{-x_i}$ compared to $\hat{\beta}_{-x_i}$, as opponents may bid $0$ instead of $1$: 
\begin{align*}
&\E_{{{\mathbf{v}}_{-x_i}} \sim \calD_{x_i|v^\prime_i}} \E_{\mathbf{b}_{-x_i} \sim {\hat{\beta}_{-{x_i}}}({{\mathbf{v}}_{-x_i}})}\left[u_{x_i}\left(b^* + 1,{\mathbf{b}}_{-x_i};v_i'\right)\right] \\[0.3ex]
&\hspace{1cm}\leq \E_{{{\mathbf{v}}_{-x_i}} \sim \calD_{x_i|v^\prime_i}} \E_{\mathbf{b}_{-x_i} \sim {{\beta}_{-{x_i}}}({{\mathbf{v}}_{-x_i}})}\left[u_{x_i}\left(b^* + 1,{\mathbf{b}}_{-x_i};v_i'\right)\right].
\end{align*}
We now compare the expected utility of $x_i$ under $\hat\beta$ with that under $\beta$, in two steps.

First we consider $(\hat\beta_{x_i},\beta_{-x_i})$, in which $x_i$ bids $1$ whenever she would have bid $0$ under $\beta$ and all other bidders play $\beta_{-x_i}$. We claim that this can only increase the utility of $x_i$. Fix any point $v_{-x_i}\in\supp(\calD_{x_i|v_i'})$ and any possible profile $b_{-x_i}$ under $\beta_{-x_i}$. Note that $(v_{x_i},v_{-x_i})$ corresponds naturally to some hyperedge $e=(S,w_e)$ of $H$ with $i \in S$. If some other bidder bids $2$, then $x_i$ gets utility $0$ whether she bids $0$ or $1$. Otherwise, bidding $1$ makes $x_i$ a highest bidder, tied with at most $3$ other bidders (only the bidders $x_j$ with $j\in S\setminus\{i\}$ have nonzero value and can bid $1$; every bidder with value $0$ bids $0$), so she receives at least $(v_i'-1)/4\ge 1/2$; whereas bidding $0$ makes her tie with the $50$ dummy bidders (and possibly others), so she receives at most $v_i'/50\le 1/10$. Hence $u_{x_i}(1,b_{-x_i};v_i')\ge u_{x_i}(0,b_{-x_i};v_i')$ for every possible $b_{-x_i} \in \supp(\beta_{-x_i}(v_{-x_i}))$. Therefore the expected utility of $x_i$ under the profile $(\hat\beta_{x_i},\beta_{-x_i})$ is at least her expected utility under $\beta$.

We now consider the full profile $\hat{\beta}$. To understand how the utility of $x_i$ changes, consider the following: First draw ${\mathbf{b}}_{-x_i}\sim\beta_{-x_i}(v_{-x_i})$ and replace every bid $0$ of a bidder with nonzero value by $1$; the resulting profile is distributed according to $\hat\beta_{-x_i}(v_{-x_i})$. The two profiles differ only if some $x_j$ with $j\in S\setminus\{i\}$ bids $0$ under $\beta$. Since $v_{-x_i}$ has at most $3$ bidders with nonzero value other than $x_i$, and each of them bids $0$ with probability at most $10\eps'$ by \Cref{lem:small_0}, this happens with probability at most $30\eps'$ by a union bound. Moreover, since $x_i$ bids at least $1$ under $\hat\beta_{x_i}$, her utility always lies in $[0,v_i'-1]\subseteq[0,4]$, so on the event that the two profiles differ her utility changes by at most $4$. Hence the expected utility of $x_i$ under $\hat\beta$ is at least her expected utility under $(\hat\beta_{x_i},\beta_{-x_i})$ minus $4\cdot 30\eps'=120\eps'$. 
All together, we obtain:%
\[
   \E_{{{\mathbf{v}}_{-x_i}} \sim \calD_{x_i|v^\prime_i}} \E_{\mathbf{b} \sim {\hat{\beta}}(v_i',{{\mathbf{v}}_{-x_i}})}\left[u_{x_i}\left({\mathbf{b}};v_{i}'\right)\right]  \geq \E_{{{\mathbf{v}}_{-x_i}} \sim \calD_{x_i|v^\prime_i}} \E_{{\mathbf{b}} \sim {\beta}({v_i', {\mathbf{v}}_{-x_i}})}\left[u_{x_i}\left(\mathbf{b};v_i'\right)\right] - 30\eps' \cdot 4 .
\]
As $\beta$ is an $\eps'$-approximate MBNE, it follows that 
\begin{align*}
     U_i(\sigma) &\geq W_i\E_{{{\mathbf{v}}_{-{x_i}}} \sim \calD_{x_i|v^\prime_i}} \E_{\mathbf{b} \sim {\beta}(v_i',{{\mathbf{v}}_{-x_i}})}\left[u_{x_i}\left(\mathbf{b};v_i'\right)\right] -  120\eps' W_i \\[0.5ex] &\geq W_i\E_{{{\mathbf{v}}_{-x_i}} \sim \calD_{x_i|v^\prime_i}} \E_{\mathbf{b}_{-x_i} \sim {\beta_{-x_i}}({{\mathbf{v}}_{-x_i}})}\left[u_{x_i}\left(b^* + 1,\mathbf{b}_{-x_i};v_i'\right)\right]- 121 \eps' W_i \\[0.3ex]
     &\geq U_i(b^\ast,\sigma_{-i}) -  121\eps' W_i.
\end{align*}
Using the choice of $\eps'=\eta/121$, this finishes the proof that $\sigma$ is an $\eta$-NE of $H$.

Finally, by dividing bids and values {in} $A_H$ by $5$, as described in \Cref{sec:defDFPA}, we have that there exists some constant $\eps$ such that computing an $\eps$-approximate MBNE of a DFPA is \PPAD-hard.
In particular, we can take $\eps=\eta/605$  and our hard DFPA instances have $B=\{0,1/5,2/5\}$.

\begin{proof}[Proof of \Cref{lem:small_0}]

    Consider a bidder $x_i$. For ease of notation we write $z_i:=\beta_{x_i}(v_i')(0)$ to denote the probability the bidder bids $0$ when they have value $v_i'$.

    We first consider the case $5 \geq v_i' \geq  3$ (meaning $v_i > 1/2$). Then, let $\beta^\ast$ be the bidding profile where $x_i$ bids $2$ instead of $0$ when they have value $v_i'$ (and remaining bidders are unchanged):  $$\beta^*_{x_i}(v^\prime_i)(2)=\beta_{x_i}(v^\prime_i)(2)+\beta_{x_i}(v^\prime_i)(0) \quad \text{ and }\quad \beta^*_{x_i}(v^\prime_i)(0)=0.$$
    
    We observe the following: 
    \begin{flushleft}\begin{itemize}
        \item When bidding $0$, $x_i$ will gain utility at most $\frac{5}{50}$ since the $50$ dummy bidders always bid $0$.
        \item  Conditioned on $x_i$ having value $v_i' \neq 0$, there are at most $3$ other bidders who have {nonzero} value and could submit a bid of $2$. Hence the utility of submitting bid $2$ when $x_i$ has value $v_i'$ is at least $(3-2)/4$. 
    \end{itemize}\end{flushleft}
Hence we have
    $$U_{x_i}(\beta^*_{x_i}, \beta_{-x_i}; v'_i) - U_{x_i}(\beta_{x_i}, \beta_{-x_i}; v'_i) \geq z_i \cdot (1/4-5/50) \geq z_i/10,$$

\noindent
but since $\beta$ is an $\eps$-MBNE, we must have $z_i \leq 10\eps$.\footnote{If we had a mixed strategy for bidder $x_i$ that performs better than $\beta_{x_i}$, then by averaging there would be some bid $b^*$ that performs better than $\beta_{x_i}$. } 

Now, consider a bidder $x_i$ with value $v_i'=3/2$ in $A_H$. Exactly half of the mass of $\calD_{x_i\, | 3/2}$ is on the point where $x_i$ is the sole bidder with nonzero value. Let $\beta^\ast$ be the bidding profile where bidder $x_i$ bids $1$ instead of $0$ when they have value
$v_i'$ (and remaining bidders are unchanged). That is $$\beta^*_{x_i}(v^\prime_i)(1)=\beta_{x_i}(v^\prime_i)(1)+\beta_{x_i}(v^\prime_i)(0) \quad \text{ and }\quad \beta^*_{x_i}(v^\prime_i)(0)=0.$$

Then as above we can conclude that $$
U_{x_i}(\beta^*_{x_i},\beta_{-x_i};v_i')
-
U_{x_i}(\beta_{x_i},\beta_{-x_i};v_i') \geq z_i \cdot \left(\frac{1}{2}\left(\frac{3}{2}-1\right) -\left(\frac{3}{2} \cdot \frac{1}{50}\right)\right) \geq z_i/10,$$ and thus $z_i \leq 10\eps$. This finishes the proof of the lemma.
\end{proof}

\subsection{Membership in $\PPAD$}

\begin{definition}\label{def:DFPA_decomposition}
    Given a DFPA $A=(\calN, B, \calV, \calD)${,} {we} define the hypergraph decomposition of $A$ as the HDFPA $H_A$ defined as $H_A:=(\calN', B, v',E)$ where:
\begin{itemize}
    \item[(a)] The set of bids will also be $B$.
    \item[(b)] For every bidder $i$ and every value $v_i \in V_i$ we create a bidder $x_{i,v_i}$, and we give  $x_{i,v_i}$ value $v_i$. 
    \item[(c)] For every  $v^{\ell} \in \supp(\calD)$, we create the hyperedge $e^{\ell}=(S^\ell,w_{e^\ell})$ where $$S^{\ell}=\left\{x_{1, v_1^{\ell}},\ldots,x_{n,v_n^{\ell}}\right\}\quad\text{and}\quad w_{e^{\ell}}=\calD(v^{\ell}).$$
\end{itemize}
\end{definition}

We have a bijection between the profiles that {are} MBNE of $A$ and profiles that are NE of $H_A$.  
\begin{lemma}\label{thm:from_HA_to_A}
    Consider a DFPA $A$ and {its} corresponding HDFPA decomposition $H_A$. There exists a function $f$ which maps {a} strategy profile of $H_A$ to a strategy profile of $A$:
    \begin{flushleft}\begin{enumerate}
        \item $f$ is a bijection. And $f$ and $f^{-1}$ can be computed in polynomial time.
        \item $\sigma$ is an $\eps$-NE of $H_A$ if and only if $f(\sigma)$ is an $\eps$-approximate MBNE of $A$.
        \item Furthermore, if $\sigma$ is a pure strategy profile for $H_A$, $f(\sigma)$ is pure. And if $\beta$ is a pure strategy profile for $A$ then ${f^{-1}}(\beta)$ is pure.  
    \end{enumerate}\end{flushleft}

\end{lemma}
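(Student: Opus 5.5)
The natural choice is to let $f$ identify the mixed strategy of bidder $x_{i,v_i}$ in $H_A$ with the distribution $\beta_i(v_i)$ in $A$. Concretely, given a profile $\sigma$ of $H_A$ with $\sigma_{x_{i,v_i}}\in\Delta(B_{x_{i,v_i}})$, I would define $f(\sigma)=\beta$ by $\beta_i(v_i):=\sigma_{x_{i,v_i}}$ for every $i\in[n]$ and $v_i\in V_i$. Since $x_{i,v_i}$ has value $v_i$, the no-overbidding constraints coincide: $B_{x_{i,v_i}}=\{b\in B: b\le v_i\}$. Hence $f$ is a bijection between the two strategy spaces, with the obvious inverse, and both directions just copy lists of probabilities, so they run in polynomial time. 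Purity is preserved in both directions because each component is copied verbatim. This settles items 1 and 3.

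For item 2, I would use \Cref{obs:alternative}. The total weight of hyperedges containing $x_{i,v_i}$ is $W_{x_{i,v_i}}=\sum_{\ell: v^\ell_i=v_i}\calD(v^\ell)=\calD_i(v_i)>0$. Therefore the distribution $E_{x_{i,v_i}}$ picks hyperedge $e^\ell$ with probability $\calD(v^\ell)/\calD_i(v_i)$. This is exactly the law of $v^\ell$ conditioned on the $i$-th coordinate being $v_i$, i.e.\ the law $\calD_{i|v_i}$ of $\mathbf v_{-i}$ together with $v_i$. One subtlety: distinct support points of $\calD$ give distinct hyperedges $S^\ell$, since $S^\ell$ determines $v^\ell$. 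So $E$ is a set rather than a genuine multiset, but that does not matter either way.

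The key step is to check, for a fixed support point $v^\ell$ with $v^\ell_i=v_i$ and any bid $b$, that
\[
\E_{\mathbf b_{-x_{i,v_i}}\sim\sigma_{-x_{i,v_i}}}\big[u^{e^\ell}_{x_{i,v_i}}(b,\mathbf b_{-x_{i,v_i}})\big]=\E_{\mathbf b_{-i}\sim\beta_{-i}(v^\ell_{-i})}\big[u_i(b,\mathbf b_{-i};v_i)\big].
\]
This holds because $S^\ell$ contains exactly one bidder $x_{j,v^\ell_j}$ per $j\in[n]$, bidding independently according to $\sigma_{x_{j,v^\ell_j}}=\beta_j(v^\ell_j)$. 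Bidders outside $S^\ell$ do not affect $u^{e^\ell}$. Moreover, the max, tie count $h$, and uniform tie-breaking in the auction on $S^\ell$ match those of the $n$-bidder auction in $A$. The bidders $x_{j,v_j}$ are independent players in $H_A$, and in $A$ the bids of different bidders are independent given $\mathbf v$, so the product structure agrees.

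Averaging over $\ell$ gives
\[
\E_{\mathbf e\sim E_{x_{i,v_i}},\mathbf b\sim\sigma}\big[u^{\mathbf e}_{x_{i,v_i}}(b,\mathbf b_{-x_{i,v_i}})\big]=U_i(b,\beta_{-i};v_i)
\]
for every pure bid $b$. By linearity in $\sigma_{x_{i,v_i}}=\beta_i(v_i)$, the same identity holds for the mixed strategy. Hence the $\eta$-NE condition of \Cref{obs:alternative} for player $x_{i,v_i}$ and bid $b$ is literally the MBNE condition of \Cref{def:MBNEinDFPA} for $(i,v_i,b)$. Ranging over all $(i,v_i)$ ranges over all players of $H_A$, so the two notions coincide.

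I expect no real obstacle. The only care needed is to make sure the approximation parameters match, which relies on the $\eta W_i$ normalization in \Cref{def:normalized-ne}, with $W_{x_{i,v_i}}=\calD_i(v_i)$, turning into conditioning on $v_i$.
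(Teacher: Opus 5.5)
Your proposal is correct and is essentially the paper's proof: it uses the same map $\beta_i(v_i)=\sigma_{x_{i,v_i}}$, the same identification $W_{x_{i,v_i}}=\calD_i(v_i)$, and the same per-hyperedge identity. The only cosmetic difference is that you normalize via \Cref{obs:alternative}, whereas the paper shows directly that $U^{(H_A)}_{x_{i,v_i}}(b,\sigma_{-x_{i,v_i}})=\calD_i(v_i)\,U^{(A)}_i(b,\beta_{-i};v_i)$. One detail that you and the paper both leave implicit: \Cref{def:MBNEinDFPA} also quantifies over overbids $b>v_i$, which \Cref{def:normalized-ne} does not, but such deviations yield utility at most $0\le U_i(\beta_i,\beta_{-i};v_i)$ and are therefore harmless.
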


We delay the proof of \Cref{thm:from_HA_to_A} to the end of this subsection. With the above lemma, it's easy to show that computing an $\eps$-approximate MBNE of a DFPA is in $\PPAD$.

\begin{lemma}
    The problem of computing an $\eps$-approximate MBNE of {a} DFPA, where $\eps>0$ is a rational number given in binary in the input{,} is in \PPAD{.} 
\end{lemma}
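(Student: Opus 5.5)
The plan is to reduce the DFPA problem to the HDFPA problem via the hypergraph decomposition of \Cref{def:DFPA_decomposition}, and then invoke \Cref{lem:HDFPA_in_PPAD}. Given a DFPA $A=(\calN,B,\calV,\calD)$ and a rational $\eps>0$, we first construct $H_A$ in polynomial time. This is possible because $\calD$ is given explicitly as a list of support points with their masses. Hence $H_A$ has exactly $\sum_i |V_i|\le n\cdot|\supp(\calD)|$ bidders and $|\supp(\calD)|$ hyperedges, and each hyperedge lists at most $n$ bidders. The weights $w_{e^\ell}=\calD(v^\ell)$ are rational and positive. Every bidder $x_{i,v_i}$ lies in at least one hyperedge, namely any support point whose $i$-th coordinate is $v_i$, so $H_A$ is a valid HDFPA. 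Its values lie in $[0,1]$ and $B\subseteq[0,1]$. The no-overbidding constraint $\Delta(B_{x_{i,v_i}})$ matches the DFPA constraint that $\beta_i(v_i)$ is supported on bids at most $v_i$.

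Next we call the \PPAD\ procedure of \Cref{lem:HDFPA_in_PPAD} on $(H_A,\eps)$ to obtain an $\eps$-NE $\sigma$ of $H_A$. We then output $f(\sigma)$, where $f$ is the bijection of \Cref{thm:from_HA_to_A}. By item 1 of that lemma, $f$ is polynomial-time computable. By item 2, $f(\sigma)$ is an $\eps$-approximate MBNE of $A$. Since \PPAD\ is closed under polynomial-time reductions of this form (a polynomial-time map on instances together with a polynomial-time map on solutions), the DFPA problem is in \PPAD.

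For intuition on why item 2 holds, take $f(\sigma)$ to be $\beta_i(v_i)=\sigma_{x_{i,v_i}}$. Conditioning $\calD$ on $v_i$ yields exactly the distribution $E_{x_{i,v_i}}$ over hyperedges containing $x_{i,v_i}$, with probabilities $\calD(v^\ell)/W_{x_{i,v_i}}$. In the hyperedge $e^\ell$, every bidder $j$ appears as $x_{j,v^\ell_j}$, so the full auction is reproduced. \Cref{obs:alternative} then identifies the two $\eps$-conditions term by term.

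No step here is a real obstacle, since the substantive content is contained in \Cref{thm:from_HA_to_A} and \Cref{lem:HDFPA_in_PPAD}. The only points that need checking are the polynomial size of $H_A$, which holds because the support of $\calD$ is listed explicitly, and that $\eps$ is carried through unchanged thanks to the $\eta W_i$ normalization in \Cref{def:normalized-ne}.
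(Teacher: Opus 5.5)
Your proposal is correct and matches the paper's proof: construct the hypergraph decomposition $H_A$ in polynomial time, obtain an $\eps$-NE of $H_A$ via \Cref{lem:HDFPA_in_PPAD}, and map it back with the polynomial-time bijection $f$ of \Cref{thm:from_HA_to_A} to an $\eps$-approximate MBNE of $A$. Your extra checks (every $x_{i,v_i}$ lies in some hyperedge, the weights are positive rationals, and the no-overbidding constraints agree) are details the paper leaves implicit.
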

\begin{proof}
Let $A$ be a DFPA. It is clear that we can build $H_A$ from $A$ in time polynomial in the length of the representation of $A$. By \Cref{lem:HDFPA_in_PPAD} we know that computing  an $\eps$-NE $\sigma$ of $H_A$ is in $\PPAD$. By \Cref{thm:from_HA_to_A} given $\sigma$ we can compute in polynomial time an $\eps$-approximate MBNE of $A$. 
\end{proof}

For a bidder $y$ of $H_A$, a bid $b\in B$, a hyperedge $e$ and a profile $\sigma_{-y}$ of the other bidders, write $$u_y^e(b,\sigma_{-y}):=\E_{{\mathbf{b}}_{-y}\sim\sigma_{-y}}\left[u_y^e(b,{\mathbf{b}}_{-y})\right].$$

\begin{proof}[Proof of \Cref{thm:from_HA_to_A}]
    Let $A=(\calN, B, \calV, \calD)$ be a DFPA. Let $H_A$ be the {HDFPA} decomposition of~$A$.  
Let $\sigma$ be a strategy profile for $H_A$. For clarity, we use $U^{(H_A)}$ when talking about utilities related to  $H_A$ and $U^{(A)}$ to talk about utilities from $A$.

We define the map $f$ as the function which maps $\sigma$ to the strategy profile $\beta$ such that:
$$\beta_i(v):=\sigma_{x_{i,v}}{.}$$ 

That is, the strategy of bidder $i$ when they have value $v$ is the strategy of bidder $x_{i,v}$ in the HDFPA. It's easy to see this is a bijection and furthermore this function and its inverse are {polynomially} computable. 
It's also easy to see that $(3)$ holds, as $\beta$ is a pure profile iff $\sigma$ is pure.

It remains to prove that $\beta=f(\sigma)$ is an $\eps$-approximate MBNE of $A$ if and only if $\sigma$ is an $\eps$-NE of $H_A$. First assume $\sigma$ {is} an $\eps$-NE of $H_A$; we will show that for any bidder $i$, $v_i \in V_i$ and  $b^*\in B$:
\begin{equation}\label{eq:equlibrium_Auction}
U^{(A)}_{i}(\beta_i,\beta_{-i}; v_i) \geq U^{(A)}_{i}(b^*,\beta_{-i};v_i) -\eps. 
\end{equation}

So, fix a bidder $i$ of $A$ and a value $v_i \in V_i$ and a bid $b^*$. For ease of notation we let $y:=x_{i,v_i}$. We first observe that in $H_A$ we have:

$$W_{y}=\sum_{ \substack{v^{\ell} \in \supp(\calD),\\ v^{\ell}_{i}=v_i} }\calD(v^{\ell})=\calD_i(v_i){.}$$
So, since the strategy profile $\sigma$ {achieves} an $\eps$-NE of $H_A$, we know that
\begin{equation}\label{eq:beta_NE}
    U_{y}^{(H_A)}(\sigma) \geq U_y^{(H_A)}(b^*, \sigma_{-y}) -\eps W_y = U_y^{(H_A)}(b^*, \sigma_{-y}) -\eps \calD_i(v_i).
\end{equation}
We also observe that
\begin{align*}
    U^{(A)}_{i}(\beta_i,\beta_{-i}; v_i)&=\E_{{\mathbf{b}} \sim \beta_i(v_i)} \left[U_i^{(A)}({\mathbf{b}}, \beta_{-i}; v_i) \right] 
    =\E_{{\mathbf{b}} \sim \sigma_{y}} \left[U_i^{(A)}({\mathbf{b}}, \beta_{-i}; v_i)\right].
\end{align*}
{Here} we used $\beta_i(v_i)=\sigma_{y}$ by definition. Now for a fixed bid $b$ we have:
\begin{align*}
    U_{y}^{(H_A)}(b, \sigma_{-y})&=\sum_{\substack{e^{\ell}=(S^\ell,w_{e^\ell}) \in E,\\y \in S^\ell}} w_{e^{\ell}} \cdot u_{y}^{e^{\ell}}(b,\sigma_{-y}) =\sum_{ v^{\ell}\,:\, v^{\ell}_{i}=v_i} \calD\left( v^{\ell}\right)  u_{y}^{e^{\ell}}(b,\sigma_{-y}){.}
\end{align*}
Recall that by definition, we have: $\calD\left(v^{\ell}\right)=\calD_{i|v_i}\left(v^{\ell}_{-i}\right) \calD_{i}\left(v_i\right)$. Hence, 
\begin{align*}
       U_{y}^{(H_A)}(b, \sigma_{-y})&=\calD_i(v_i) \sum_{v^{\ell}\,:\, v^{\ell}_{i}=v_i} \calD_{i|v_i}(v^{\ell}_{-i})  u_{y}^{e^\ell}(b,\sigma_{-y}) =\calD_i(v_i) \E_{ v^{\ell}_{-i} \sim \calD_{i|v_i}}  u_{y}^{e^\ell}(b,\sigma_{-y}){.}
\end{align*}
\noindent
Observe that for any edge $e^\ell$ containing $y$, the utility $u_y^{e^\ell}(b,\sigma_{-y})$ only depends on the bids submitted by bidders $x_{j,v^\ell_j}$ in $e^\ell$ ($j \neq i$), and in particular $x_{j,v^\ell_j}$ submits a bid ${\mathbf{b}}_j \sim \sigma_{x_{j,v^{\ell}_j}}$. So, using $\beta_j(v^{\ell}_j)=\sigma_{x_{j,v^{\ell}_j}}$ we can see that $$u_y^{e^\ell}(b,\sigma_{-y})=\E_{\mathbf{b}_{-i} \sim \beta_{-i}( v^{\ell}_{-i})}u_i(b,\mathbf{b}_{-i};v_i).$$
\noindent
Putting it all together, {we} have:
\begin{align*}
    U_{y}^{(H_A)}(b, \sigma_{-y})&=\calD_i(v_i) \E_{ \mathbf{v}_{-i} \sim \calD_{i|v_i}}  \E_{\mathbf{b}_{-i} \sim \beta_{-i}( \mathbf{v}_{-i})}u_i(b,\mathbf{b}_{-i};v_i)= \calD_i(v_i) U^{(A)}_i(b,\beta_{-i}; v_i){.}
\end{align*}

 We can thus conclude that for any bid $b$ we have $$U^{(A)}_i(b,\beta_{-i}; v_i)=U_{y}^{(H_A)}(b, \sigma_{-y})/\calD_i(v_i).$$ In particular, we have
 $$U_{i}^{(A)}(\beta_i, \beta_{-i}; v_i)=U_{y}^{(H_A)}(\sigma)/\calD_i(v_i) \quad \text{ and } \quad U_{i}^{(A)}(b^*, \beta_{-i}; v_i)=U_{y}^{(H_A)}(b^*,\sigma_{-y})/\calD_i(v_i){.} $$
 {Hence, we} can conclude that \Cref{eq:equlibrium_Auction} {is} equivalent to \Cref{eq:beta_NE}. Thus, we can conclude that $\beta$ is an $\eps$-approximate MBNE of $A$. Since \Cref{eq:equlibrium_Auction,eq:beta_NE} are equivalent, we also have that if $\beta$ is an $\eps$-approximate MBNE of $A$, then $f^{-1}(\beta)$ must be an $\eps$-NE of $H_A$. 
\end{proof}

\noindent\textbf{AI Methodology.} %
{The new HDFPA model arose in
discussions among the authors, without AI assistance.
With input-output relations specified by the authors, 
ChatGPT~5.6 Sol was used in the design of the three gadgets in the proof of \Cref{thm:HDFPA_ppad_hard}, which was particularly helpful in setting up weights of the auctions. We also used Claude Opus~5.5 for
literature search and writing suggestions, and ChatGPT~6 Astra to proofread the paper. All proofs were written and verified by the authors, who take full responsibility for the content of the paper.
}

\bibliographystyle{alpha}
\bibliography{the}{}

\appendix
\section{NP-completeness of Pure Equilibria in HDFPA}\label{sec:npc}

In this section we prove the following theorem:
\begin{theorem}\label{thm:NP_hard_HDFPA}
     There exists $\eps$ of size inverse-polynomial in the length of the problem description such that, for all $\eps'<\eps$, the problem of deciding the existence of a pure
$\eps'$-PNE of a HDFPA is
NP-complete.
\end{theorem}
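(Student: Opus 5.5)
Membership in NP is straightforward, and hardness will come from a reduction through the hypergraph decomposition of \Cref{def:DFPA_decomposition}.

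\textbf{Membership.} A pure profile $b\in B^n$ has polynomial size. For each bidder $i$ and each $b_i'\in B_i$, the quantities $U_i(b)$ and $U_i(b_i',b_{-i})$ can be computed exactly in polynomial time (\Cref{lem:simple1}, or directly by summing over hyperedges). We then check $U_i(b)\ge U_i(b_i',b_{-i})-\eps' W_i$ for all $i$ and $b_i'$, which puts the problem in NP for any rational $\eps'$.

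\textbf{Hardness via DFPA.} The most economical route uses the known result of \cite{filosratsikas2025equilibriumcomputationfirstpriceauctions}: deciding the existence of a pure BNE in a DFPA with correlated priors is NP-hard. We would start from their hard instance $A$ and build its hypergraph decomposition $H_A$ (\Cref{def:DFPA_decomposition}), which takes polynomial time. By \Cref{thm:from_HA_to_A}, the bijection $f$ maps pure profiles to pure profiles, and $\sigma$ is an $\eps'$-NE of $H_A$ if and only if $f(\sigma)$ is an $\eps'$-approximate MBNE of $A$. Hence $H_A$ has an $\eps'$-PNE if and only if $A$ has an $\eps'$-approximate PBNE.

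\textbf{Main obstacle: exact versus approximate.} The step I expect to be hardest is reconciling exact pure BNE, which is what the cited hardness concerns, with $\eps'$-approximate pure equilibria. I would handle it by a gap argument. Pure profiles form a finite set, and for each bidder, value and deviation the regret is a rational number whose denominator is bounded by the encoding. All utilities are linear combinations of values and bids with coefficients $\calD(v)/h$ and $h\le n$. So every strictly positive regret, after normalizing by $\calD_i(v_i)$ (equivalently by $W_y$ in $H_A$), is at least some $\eps_0=2^{-\mathrm{poly}}$.

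This bound is only exponentially small, not inverse-polynomial. To obtain an inverse-polynomial $\eps$, I would instead inspect the hard instances of \cite{filosratsikas2025equilibriumcomputationfirstpriceauctions}. They use constant-size bid sets, at most four nonzero-value bidders per support point, and a bounded number of support points per value up to polynomial factors, so each positive regret is at least $1/\mathrm{poly}$.

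If relying on those instance details is too fragile, the fallback is a direct reduction from 3SAT into an HDFPA. Each variable would be a bidder with $B=\{0,1\}$. Each clause would be a small gadget using constant-weight hyperedges, built in the style of the COPY and NOT gadgets, with a helper bidder that has no pure best response exactly when the clause is falsified, giving a matching-pennies-like cycle. Because all weights are constants and each bidder's $W_i$ is $O(1)$ up to degree, any strict pure deviation gains at least a constant fraction of $W_i$ divided by the degree. This yields the inverse-polynomial $\eps$, and for every $\eps'<\eps$, $\eps'$-PNE coincide with exact PNE, which correspond to satisfying assignments.
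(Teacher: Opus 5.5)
Your skeleton is the paper's proof. You build the hypergraph decomposition $H_A$ and use the bijection $f$ of \Cref{thm:from_HA_to_A}: it maps pure profiles to pure profiles, and pure $\eps'$-NE of $H_A$ correspond exactly to $\eps'$-approximate PBNE of $A$. Membership in NP is shown by checking a pure certificate.

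Where you diverge is your ``main obstacle'', and it does not actually arise. The paper uses \cite{filosratsikas2025equilibriumcomputationfirstpriceauctions} in its approximate form (\Cref{thm:DFPA_NP}). That result already gives an inverse-polynomial $\eta$ such that deciding whether an $\eta'$-PBNE exists is NP-hard for every $\eta'<\eta$. The bijection transfers this directly. The only extra point, which you should state explicitly, is that the description lengths of $A$ and $H_A$ are polynomially related, so $\eta$ stays inverse-polynomial in the length of $H_A$.

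If you really had only hardness for exact PBNE, your workaround would not prove the statement.
\begin{itemize}
\item The rigorous part gives only $\eps_0=2^{-\mathrm{poly}}$, which is a strictly weaker theorem.
\item The upgrade to $1/\mathrm{poly}$ rests on unverified properties of the cited instances, and the property you name is not the relevant one. Even with $O(1)$ support points per conditional distribution, masses such as $\tfrac12\pm 2^{-n}$ can make a nonzero regret exponentially small. You would need polynomially bounded numerators and denominators for the masses, plus control over tie sizes.
\item The 3SAT fallback skips exactly the difficulty the paper stresses. Auctions are not directed, so clause gadgets change the variable bidders' incentives. You would have to show that every satisfying assignment still extends to a pure equilibrium, that helpers of satisfied clauses have pure best responses, and conversely that every pure equilibrium yields a satisfying assignment.
\end{itemize}
So the repair is simply to cite the approximate DFPA result, as the paper does.
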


We recall the following theorem on the hardness of deciding the existence of a pure equilibrium in a DFPA from \cite{filosratsikas2025equilibriumcomputationfirstpriceauctions}. 
\begin{theorem}\label{thm:DFPA_NP}
     There exists $\eta$ of size inverse-polynomial in the length of the problem description such that, for all $\eta'<\eta$, the problem of deciding the existence of a pure
$\eta'$-PBNE of a DFPA with correlated priors is
NP-hard.
\end{theorem}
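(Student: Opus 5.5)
We prove \Cref{thm:DFPA_NP} in two stages. First we show that deciding whether an HDFPA has a pure $\eps'$-PNE is \NP-hard, using the gadget calculus of \Cref{sec:mainreduction}. Then we transfer this to DFPA with correlated priors using the shifted construction $A_H$ of \Cref{sec:finalconst}. The source problem is 3-SAT, and all gadget margins will be constants, so hardness holds already for a constant (hence inverse-polynomial) approximation parameter. Membership in \NP{} is not claimed here, but it would follow since a pure profile can be checked in polynomial time via \Cref{lem:simple1} and \Cref{thm:from_HA_to_A}.

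\textbf{Stage 1 (3-SAT to pure HDFPA).} For each variable $t$ we add an output-type bidder $\yy{t}$ of value $4$, where bid $1$ encodes ``true''. For a literal we use either $\yy{t}$ or the output of a NOT gadget $R(\neg,\cdot,\cdot)$ fed through a COPY gadget. The key requirement is that $\yy{t}$ is \emph{free}: its $\Delta$ from every gadget it touches must be identically $0$. We get this by routing every use of a literal through a COPY gadget $R(=,\yy{t},\xx{t})$, which has $\Delta^R_{\yy{t}}\equiv 0$ by \Cref{lem:copy-identities}, and adding the singleton edge trick so that $\yy{t}$ has total $\Delta=0$ and either pure bid is a best response. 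For each clause $c$ we add a ``matching-pennies'' gadget on two fresh helpers $\hh{c,1},\hh{c,2}$ of value $3$, built from edges of the types in \Cref{clm:simple1} and \Cref{clm:delta_contrib}, with weights chosen so that
\[
\Delta_{\hh{c,1}}=\alpha\bigl(p(\hh{c,2})-\tfrac12\bigr)-\gamma\textstyle\sum_{\ell\in c}p(\xx{\ell}),\qquad \Delta_{\hh{c,2}}=\alpha\bigl(\tfrac12-p(\hh{c,1})\bigr)-\tfrac{\alpha}{4}.
\]
The coefficients are solved exactly as the weights of the SUM and NOT gadgets were solved. If no literal of $c$ bids $1$, these helpers have no pure mutual best response, because their best responses cycle. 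If some literal bids $1$, then $\hh{c,1}$ is forced to $0$ and $\hh{c,2}$ then has a strict best response, which breaks the cycle. The input bidders $\xx{\ell}$ are dominated by their COPY weight $\Lambda$, so \Cref{lem:copy-identities} still pins $\xx{\ell}=\yy{t}$ in any pure $\eps'$-PNE, because for pure profiles the error bound $300\eps'<1$ forces exact equality. Consequently, a pure $\eps'$-PNE exists if and only if the formula is satisfiable.

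\textbf{Stage 2 (HDFPA to DFPA).} We apply the map $H\mapsto A_H$, which adds $50$ dummy bidders, shifts bids and values by $1$, and adds the anchor points. In the direction from a pure approximate PBNE of $A_H$ to $H$, the proof in \Cref{sec:finalconst} goes through verbatim, and pure strategies map to pure strategies. For the converse, we take a pure $\eps'$-PNE $\sigma$ of $H$ and let each $x_i$ bid $\sigma_i+1$ at value $v_i'$. Anchors with value $3/2$ bid $1$, and by \Cref{lem:small_0}'s comparison bidding $0$ is strictly worse for them. Since no nonzero-value bidder then bids $0$, value-$0$ bidders never win, and utilities coincide exactly with those in $H$. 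It remains to check that the deviation to $0$ is unprofitable. It yields at most $v_i'/50$, which is below the utility of bidding $1$, at least $(v_i'-1)/4$.

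\textbf{Main obstacle.} The hardest part is Stage 1's non-interference: clause gadgets must not perturb variable bidders. We address this by placing every clause edge on the copies $\xx{\ell}$, which are shielded by weight $\Lambda$, and never on $\yy{t}$. The delicate computation is choosing the helper weights so that, for pure profiles, the best-response cycle exists exactly when all three literals are $0$.
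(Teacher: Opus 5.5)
\textbf{Comparison with the paper.} Your route differs from the paper's. The paper does not prove this statement; it imports it from \cite{filosratsikas2025equilibriumcomputationfirstpriceauctions} and then uses it to transfer hardness \emph{down} to HDFPA via the hypergraph decomposition (\Cref{thm:from_HA_to_A}). You go the other way: you prove pure-equilibrium hardness for HDFPA directly from 3-SAT with the gadgets of \Cref{sec:mainreduction}, and then lift it \emph{up} to DFPA via $A_H$. Completed, this would be a self-contained proof with a constant threshold. Stage~2 is essentially fine, with one nit: the expected utility of bid $1$ need not be at least $(v_i'-1)/4$, because an opponent may bid $2$. Compare bids $0$ and $1$ pointwise as in \Cref{sec:finalconst} instead. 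The real problem is in Stage~1.

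\textbf{The clause gadget cannot exist as specified.} You take $\hh{c,1}$ and $\hh{c,2}$ both of value $3$. Every hyperedge containing two bidders of equal value treats them symmetrically. So the coefficient of $p(\hh{c,2})$ in $\Delta_{\hh{c,1}}$ always equals the coefficient of $p(\hh{c,1})$ in $\Delta_{\hh{c,2}}$. (In \Cref{clm:delta_contrib} this coefficient is $\frac{v_x}{m+2}-\frac{v_x-1}{2}$, which depends only on $v_x$ and $m$; the same symmetry holds for hyperedges that also contain literal bidders.) Hence, for any fixed bids of the other bidders, $\Delta_{\hh{c,1}}=A+C\,p(\hh{c,2})$ and $\Delta_{\hh{c,2}}=B+C\,p(\hh{c,1})$ with a common $C$. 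The two helpers therefore play an exact potential game with potential $Aq_1+Bq_2+Cq_1q_2$, which always has a pure equilibrium. Your target coefficients $+\alpha$ and $-\alpha$ are unattainable for any weights. For a falsified clause the helpers would simply settle on a pure profile, so the unsatisfiable direction fails.

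\textbf{The fix.} Use helpers of \emph{different} values. This is exactly how the $\h$/$\y$ pair (values $3$ and $4$) in the SUM and NOT gadgets gets interaction coefficients $+1$ and $-2$. Concretely:
\begin{itemize}
\item Take $\hh{c,1}$ of value $3$ and $\hh{c,2}$ of value $4$.
\item Add the hyperedge $\{\hh{c,1},\hh{c,2}\}$ with weight $6\alpha$, and the hyperedge $\{\hh{c,1},\hh{c,2}\}$ plus two fresh $0$-anchors with weight $8\alpha$.
\item For each literal $\ell$ of a $3$-literal clause, add $\{\xx{\ell},\hh{c,1}\}$ plus two fresh $0$-anchors with weight $4\gamma$.
\item Add singletons of weight $\frac{27}{2}\alpha+15\gamma$ on $\hh{c,1}$ and $\frac{87}{4}\alpha$ on $\hh{c,2}$.
\end{itemize}
\Cref{clm:simple1} and \Cref{clm:delta_contrib} then give exactly your two formulas.

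You also need the following repairs:
\begin{itemize}
\item Impose $\gamma>\alpha/2$; otherwise a single true literal does not force $\hh{c,1}$ to $0$.
\item Keep the clause and NOT weight on each $\xx{\ell}$ below $\Lambda/2$, e.g., one COPY gadget per occurrence. The literal hyperedges contribute the constant $2\gamma$ to $\Delta_{\xx{\ell}}$.
\item Put no singleton on $\yy{t}$ beyond the COPY gadget's own. An extra one would make $\Delta_{\yy{t}}<0$ and destroy the freedom of the variable.
\item For the satisfiable direction, exhibit an explicit exact pure NE. COPY helpers bid the common bit of $\yy{u},\xx{u}$, NOT helpers bid their input's bit, and clause helpers play $(0,1)$.
\end{itemize}
With these changes the argument goes through.
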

\begin{proof}[Proof of \Cref{thm:NP_hard_HDFPA}]

     We first show the problem is NP-hard by showing that checking the existence of an $\eps$-PBNE of {a} DFPA is polytime reducible to checking the existence of {an} $\eps$-PNE of a HDFPA. Let $A=(\calN, B, \calV, \calD)$ be a DFPA and let $H_A$ be {its} hypergraph decomposition (see \Cref{def:DFPA_decomposition}).  The description of $H_A$ can be computed in polynomial time in the description of $A$.

     Consider the function $f$ guaranteed to {exist} by \Cref{thm:from_HA_to_A}. From its existence we can conclude that for any $\eps\geq 0$ we have that $A$ has an $\eps$-PBNE {if and only if} $H_A$ has {an} $\eps$-PNE. Indeed, $f$ is a bijection between profiles that are pure $\eps$-PNE of $H_A$ and profiles that are pure $\eps$-PBNE of $A$.

     Now we observe that the description {sizes} of $A$ and $H_A$ are polynomially related to each other. Hence, the proof follows immediately from \Cref{thm:DFPA_NP} since the $\eta$ guaranteed to {exist} by the theorem is also of size inverse polynomial in the description size of $H_A$. 

     \medskip
     Membership in NP follows easily. A certificate {is} a pure profile $\sigma$ with no overbidding. Since {each} bidder follows a pure strategy in $\sigma$, the utility $U_i(b,\sigma_{-i})$ can be computed in polynomial {time} for every bid $b$ and bidder $i$ in the size of $H$. Hence, we can check in polynomial time if $\sigma$ is {an} $\eps$-PNE of $H$ or not. 
\end{proof}

\end{document}